\keywords{Communication patterns, Timing properties, Reduction}
\theoremstyle{plain} 
\theoremstyle{plain}\newtheorem{example}[thm]{Example} 
\def\eg{{\em e.g.}}
\definecolor{darkGreen}{RGB}{20,150,50}
\definecolor{blue-violet}{rgb}{0.54, 0.17, 0.89}
\definecolor{pink}{RGB}{255,153,204}
\definecolor{darkBlue}{RGB}{10,40,160}
\newcommand{\JH}[1]{{\color{black} #1}}
\newcommand{\FG}[1]{{\color{black} #1}}
\newcommand{\MZ}[1]{{\color{black} #1}}
\newcommand{\EK}[1]{{\color{black} #1}}
\newcommand{\rev}[1]{{\color{black} #1}}
\newcommand{\overto}[1]{\stackrel{#1}{%
		\overrightarrow{\smash{\,\scriptsize{\phantom{#1}}\,}}}}
\def\bag{\mathit{bag}}
\def\now{\mathit{now}}
\def\arrival{\mathit{arrival}}
\def\deadline{\mathit{deadline}}
\def\after{\mathit{after}}
\def\msgsig{\mathit{msgsig}}
\def\statevars{\mathit{statevars}}
\def\Msg{\mathit{Msg}}
\def\ID{\mathit{ID}}
\def\Var{\mathit{Var}}
\def\Interval{\mathit{Var}_\Delta}
\def\Value{\mathit{Value}}
\def\life{\mathit{life}}
\def\RcvPublish{\mathit{RcvPublish}}
\def\lastPub{\mathit{lastPub}}
\def\consume{\mathit{consume}}
\def\consumed{\mathit{consumed}}
\def\assertion{\mathit{Assertion}}
\def\service{\mathit{service}}
\def\transmission{\mathit{transmissionTime}}
\def\latency{\mathit{LatencyOverLoad}}
\def\int{\mathbb{N}}
\lstdefinestyle{customjava}{
	belowcaptionskip=1\baselineskip,
	breaklines=true,
	numbers=left,
	xleftmargin=\parindent,
	basicstyle=\footnotesize\ttfamily,
	numberstyle=\color{black}
}
\newcommand{\powerset}[1]{\mathcal{P}(#1)}
\newcommand{\powermultiset}[1]{\mathcal{P}^{*}(#1)}
\newcommand{\longhooktrans}[1]{\,{\lhook\joinrel\xrightarrow{#1}}\,}
\newcommand{\longtrans}[1]{\xrightarrow{#1}}
\begin{document}
	
	\title[Specification and Verification of Timing Properties]{Specification and Verification of Timing Properties in Interoperable Medical Systems
	}
	
	\author[M.~Zarneshan]{Mahsa Zarneshan\rsuper{a}}
	
	\author[F.~Ghassemi]{Fatemeh Ghassemi\rsuper{a,b}\texorpdfstring{\textsuperscript{\textdagger}}{}}
	\thanks{\textsuperscript{\textdagger}{} Corresponding Author}
	
	\author[E.~Khamespanah]{Ehsan Khamespanah\rsuper{a}}
	
	\author[M.~Sirjani]{Marjan Sirjani\rsuper{c}}
	
	\author[J.Hatcliff]{John Hatcliff\rsuper{d}}

	\address{School of Electrical and Computer Engineering, University of Tehran,Tehran, Iran}
	\email{m.zarneshan@ut.ac.ir,fghassemi@ut.ac.ir,e.khamespanah@ut.ac.ir}  

	\address{School of Computer Science, Institute for Research in Fundamental Sciences, PO. Box 19395-5746, Tehran, Iran}

	\address{School of Innovation, Design and Engineering, M\"{a}lardalen University, V\"{a}ster{\aa}s, Sweden}
    \email{marjan.sirjani@mdh.se}

	\address{Computer Science Department, Kansas State University, USA}
	\email{hatcliff@ksu.edu}  

	
	
	
	
	\begin{abstract}
		To support the dynamic composition of various devices/apps into a medical system at point-of-care, a set of communication patterns to describe the communication needs of devices has been proposed. To address timing requirements, each pattern breaks common timing properties into finer ones that can be enforced locally by the components. Common timing requirements for the underlying communication substrate are derived from these local properties. The local properties of devices are assured by the vendors at the development time. Although organizations procure devices that are compatible in terms of their local properties and middleware, they may not operate as desired. The latency of the organization network interacts with the local properties of devices. To validate the interaction among the timing properties of components and the network, we formally specify such systems in Timed Rebeca. We use model checking to verify the derived timing requirements of the communication substrate in terms of the network and device models. We provide a set of templates as a guideline to specify medical systems in terms of the formal model of patterns. A composite medical system using several devices is subject to state-space explosion. We extend the reduction technique of Timed Rebeca based on the static properties of patterns. We prove that our reduction is sound and show the applicability of our approach in reducing the state space by modeling two clinical scenarios made of several instances of patterns.
	\end{abstract}
	\maketitle
	
	
	\section{Introduction}\label{S:intro}
	
	
	Medical Application Platforms (MAPs) \cite{hatcliff2012rationale} support the deployment of medical systems that are composed of medical devices and apps. The devices, apps, and the platform itself may be developed independently by different vendors.  
	The ASTM F2761 standard \cite{ASTM} specifies a
	particular MAP architecture called an Integrated Clinical Environment (ICE).
	The AAMI-UL 2800 standard complements F2671 by defining safety/security requirements of interoperable medical systems, including those built using the ICE architecture. Other medical domain standards such as the IEC 80001 series address safety, security, and risk management of medical information technology (IT) networks. 
	All of these standards, as well as emerging regulatory guidance documents for interoperable medical devices, emphasize the importance of accurately specifying the device and app interfaces, understanding the interactions between devices and apps, as well as the implications of those interactions (and associated failures) for safety and security. Timeliness is an important safety aspect of these interactions -- sensed information and actuation control commands need to be communicated to and from medical functions within certain latency bounds.
	A number of approaches to medical device interfacing have been proposed. Given the needs described above, interfacing approaches and interoperability platforms would clearly benefit from specification and verification frameworks that can define the interface capabilities of devices/apps and provide an automated means for verifying properties of interactions of devices/apps as they are composed into systems.
	Using such a framework that supports a compositional approach, we can inspect the substitutability and compatibility of devices to have a flexible and correct composite medical system. 
	
	
	Goals for a device interfacing framework, together with its communication needs have been described in \cite{requirements}.
	Working from these goals, a collection of communication patterns for MAPs was proposed in \cite{7318707} that can be implemented
	on widely available middleware frameworks. 
	These communication patterns address timing properties for medical systems built on a MAP. 
	The patterns break down the timing properties into finer properties that can be locally monitored by each component.
	The timing properties impose constraints on timing behavior of components like the minimum and maximum amount of time between consequent sent or handled messages. These constraints    balance the message passing 
	speed among components, and assure freshness of data. 
	The timing requirements of the communication substrate can be derived from these local timing properties. 
	The timing requirements of the communication substrate impose upper bounds on 
	communication latency. 
	With certain assumptions on the local timing properties, and configuring the network properly based on the derived requirements, we can guarantee the timing properties of the composite system.
	Assuming that only devices compatible in terms of timing properties are composed together, they 
	may fail to operate as desired due to the interaction of network latency and local timing properties. Communication failures or unpredicted and undesired delays in medical systems may result in loss of life. 
	For example, considering some of the MAP applications outlined in \cite{ASTM} and \cite{hatcliff2012rationale},  
	in a scenario where a Patient-Controlled Analgesic (PCA) Pump is being controlled by a monitoring app, once the app receives data from patient monitoring device indicating that a patient's health is deteriorating, the app needs to send a halt command to the pump within a certain time bound to stop the flow of an opioid into the patient. In a scenario, where an app is pausing a ventilator to achieve a higher-quality x-ray, the ventilator needs to be restarted within a certain time bound.
	
	We can verify the satisfaction of timing communication requirements with regard to the network behavior and configuration before deployment. The verification results are helpful for dynamic network configuration or capacity planning. We assume that components in systems (both apps and devices) satisfy their timing constraints, checked using conventional timing analysis techniques. We focus on timing issues in the communication substrate. Components have no direct control over the communication substrate performance, and ensuring that the system performs correctly under varying network performance is a key concern of the system integrator.
	
	We exploit model checking to verify that the configured devices together with assumptions about latencies in the deployed network 
	ensure timing requirements of medical systems before deployment. Each timing requirement expresses a requirement on the communication substrate for each involved pattern in the system. Each timing requirement of communication substrate imposes an upper bound on (logical) end-to-end communication latency between two components. 
	When a component takes part in more than one pattern simultaneously, it will receive an interleaving of messages. The upper bound of the end-to-end communication latency depends on such interleaving. Model checking technique is a suitable approach that considers all possible interleaving of messages to verify the properties. We use the actor-based modeling language Rebeca \cite{sirjani2004modeling,Sirjani06} to verify the configuration of medical systems. 
	We exploit the timed extension of Rebeca to address local timing properties defined in terms of the timing behavior of components. Timed Rebeca \cite{TimedRebeca,SirjaniK16} is supported by the Afra tool suite which efficiently verifies timed properties by model checking.  %
	We model communication patterns such that their components communicate over a shared communication substrate. 
	We provide a template for the shared communication substrate in Timed Rebeca, and it can be reused in modeling different medical systems irrespective of the number of components involved.
	
	
	In this paper we model and analyze communication patterns in Timed Rebeca using the 
	architecture proposed for the communication patterns in the extended version of \cite{7318707}.
	For each pattern, this architecture considers an interface component on either side of the communication to abstract the lower-level details of the communication substrate. These interface components monitor the local timing properties of patterns. So for modeling devices/apps, we only focus on their logic for communicating messages through the interfaces of patterns reusing the proposed models of patterns. The interface components of patterns together with the communication substrate are modeled by distinct actors. Since the timing behavior of network affects the timing properties, we also consider the behavior of the underlying network on scheduling messages while modeling the communication substrate. When the number of devices increases, we may encounter state space explosion problem during model checking. To tackle the problem, we propose a reduction technique while preserving the timing properties of the communication patterns, and we prove the correctness of our technique. We implement our reduction technique in Java and build a tool that automatically reduces the state space generated by Afra. We illustrate the applicability of our reduction technique through two case studies on two clinical scenarios made of several instances of patterns. Our experimental result shows that our reduction technique can minimize the number of states significantly and make analysis of larger systems possible.
	The contributions of the paper can be summarized as: 
	\begin{itemize}
		\item Modeling the communication patterns using Timed Rebeca and providing templates for building Timed Rebeca models of composite medical systems that are connected based on the communication patterns; 
		\item Proposing a novel technique for state space reduction in model checking of Timed Rebeca models;
		\item Modeling and analyzing two real-world case-studies.
	\end{itemize}
	
	This paper extends an earlier conference publication \cite{COORD20} by adding more explanation on the theory and foundation of our reduction technique.
	We provide a visualization of the state-space to show the reduction in a clearer way. We also provide guidelines and templates for modeling composite systems. The experiments on considering different communication substrate models due to different networks and the first case-study are also new materials. In our communication substrate models, we model the effect of networks by introducing different timing delays or priority on transmitting the messages of patterns.

	The novelty of our modeling approach 
	is that only the behavior of devices/apps need to be modeled. Thanks to the interface components, the behavior of devices/apps are separated from the ones monitoring the local timing properties. The model of patterns used in the system is reused with no modification and the proposed template for the communication substrate should be only adjusted to handle the messages of involved patterns. Our reduction technique takes advantage of static properties of patterns to merge those states satisfying the same local timing properties of communication patterns.
	
	Although the approach in this paper is motivated by needs of interoperable medical systems, the communication patterns and architectural assumptions that underlie the approach are application-independent. Thus, approach can also be used in other application domains in which systems are built from middleware-integrated components as long as the communications used in this paper are applied for specifying intercomponent communication.

	\section{Communication Patterns}\label{SS:patterns}
	
	In this paper, we model communication patterns using Rebeca, here we provide an outline of patterns (based on the content of \cite{7318707}).
	Devices and apps involved in a communication pattern are known as components that communicate with each other via a \emph{communication substrate}, e.g., networking system calls or a middleware.
	Each pattern is composed of a set of roles accomplished by components. A component may participate in several patterns with different roles simultaneously. Patterns are parameterized by a set of local timing properties that their violation can lead to a failure. In addition, each pattern has a point-to-point timing requirement that should be guaranteed by communication substrate.
	There are five communication patterns:
	\begin{itemize}
		\item\textbf{Publisher-Subscriber:}
		a publisher role broadcasts data about a topic and every devices/apps that need it can subscribe to data. Publisher does not wait for any acknowledgement or response from subscribers, \JH{so communication is asynchronous and one-way}.
		\item\textbf{Requester-Responder:}
		a requester role requests data from a specific responder and waits for data from the responder.
		\item\textbf{Initiator-Executor:}
		an initiator role requests a specific executor to perform an action and waits for action completion or its failure.
		\item \textbf{Sender-Receiver:}
		a sender role sends data to a specific receiver and waits until either data is accepted or rejected.
	\end{itemize}
	
	
	\subsection{Publisher-Subscriber\label{subsec::ppparam}}
	In this pattern, the component with the publisher role sends a $\it publish$ message to those components that have subscribed previously.
	\JH{Even when there is only a single subscriber component, choosing the pattern may be appropriate in situations where one wishes to have one-way asynchronous communication between the sender and receiver. In the interoperable medical device domain, this pattern would commonly be used in situations where a bedside monitor such as a pulse oximeter is sending data such as pulse rate information ($PR$) or blood oxygenation information ($SpO_2$) to some type of remote display (like a monitor that aggregates many types of health-related information for the patient) and/or applications that watch for trends in data to generate alarms for care-givers or to trigger some type of automated change in the patient's treatment. In these situations, there is a one-way flow of information from the monitoring device to one or more consumers.}
	
	This pattern is parameterized with the following local timing properties:
	\begin{itemize}
		\item MinimumSeparation ($N_{pub}$): if the interval between two consecutive $\it publish$ messages from the publisher is less than $N_{pub}$, then the second one is dropped by announcing a \textit{fast Publication} failure.
		\item MaximumLatency ($L_{pub}$): if the communication substrate fails to accept $\it publish$ message within $L_{pub}$ time units, it informs the publisher of \textit{timeout}.
		\item MinimumRemainingLifeTime ($R_{pub}$): if the data arrives at the subscriber late, i.e., after $R_{pub}$ time units since publication, the subscriber is notified by a  \textit{stale data} failure.
		\item MinimumSeparation ($N_{sub}$): if the interval between arrival of two consecutive messages at the subscriber is less than $N_{sub}$,  then the second one is dropped.
		\item MaximumSeparation ($X_{sub}$): if the interval between arrival of two consecutive messages at the subscriber is greater than $X_{sub}$ then the subscriber is notified by a \textit{slow publication} failure.
		\item MaximumLatency ($L_{sub}$): if the subscriber fails to consume a message within $L_{sub}$ time units, then it is notified by a \textit{slow consumption} failure.
		\item MinimumRemainingLifeTime ($R_{sub}$): if the remaining life time of the $\it publish$ message is less than $R_{sub}$, then the subscriber is notified by a \textit{stale data} failure.
	\end{itemize}
	
	\JH{The timing properties are chosen to (a) enable both the producer and consumer to characterize their local timing behavior or requirements and (b) enable reasoning about the producer/consumer time behavior compatibility and important ``end-to-end" timing properties when a producer and consumer are composed. For example, the local property $N_{pub}$ allows the publisher to specify the minimum separation time between the messages that it will publish. From this value, one can derive the maximum rate at which messages will be sent. This provides a basis for potential consumer components to determine if their processing capabilities are sufficient to handle messages coming at that rate. Publisher compliance to $N_{pub}$ can be checked at run-time within the communication infrastructure of the producer, e.g. before outgoing messages are handed off to the communication substrate, by storing the time that the previous message was sent.   Similarly, $N_{sub}$ and $X_{sub}$ are local properties for the subscriber. These allow the subscriber to state its assumptions/needs about the timing of incoming data. Figure~\ref{Fig:PubSub} gives further intuition about the purpose and relationship between the parameters.}

	\JH{While the above parameters can be seen as part of component \emph{interface specifications} on both the Publisher and Subscriber components, when reasoning about end-to-end properties, the following attribute reflects a property of the networking resource upon which inter-component communication is deployed.
		
		\begin{itemize}
			\item MaximumLatencyOfCommSubstrate ($L_{m}$): the maximum latency of communication of messages between the Publisher and Subscriber across the communication substrate. 
		\end{itemize}
	}
	
	Each communication pattern owns a non-local point-to-point \textit{timing requirement} that considers aggregate latencies across the path of the communication -- including delays introduced by application components, interfaces, and the communication substrate ($L_{m}$). 
	In this pattern the requirement is ``the data to be delivered with lifetime of at least $R_{sub}$, communication substrate should ensure maximum message delivery latency [across the substrate] $L_m$ does not exceed $R_{pub}-R_{sub}-L_{pub}$''
	(inequality \ref{eq:pub-sub}).
	
	\begin{equation}
	\label{eq:pub-sub}
	R_{pub}-R_{sub}-L_{pub} \geq L_{m}
	\end{equation}
	
	\JH{Regarding the intuition of this inequality, consider that the publisher will send a piece of data with a parameter $R_{pub}$ indicating how long that data will be fresh/valid. As the message is communicated, latencies will accumulate in the PublisherInterface (maximum value is $L_{pub}$) and communication substrate (maximum value is $L_{m}$). When the message arrives at the SubscriberInterface, it’s remaining freshness would be $R_{pub} - L_{pub} - L_{m}$.   That remaining freshness should be as least as large as $R_{sub}$ — the time needed by the subscriber to do interesting application work with the value, i.e., to achieve the goals of the communication, the following inequality should hold $R_{pub} - L_{pub} - L_{m} \geq R_{sub}$.
		The intuition is that $R_{pub}$ is a application property of the publisher — in essence, a ``guarantee'' of freshness to consumers based on the nature of the data, and $R_{sub}$ is ``requirement'' of the consumer (it needs data at least that fresh to do its application work). Given that $L_{pub}$ is a fixed latency in the software that interacts with the network, the network needs to guarantee that $L_{m}$ is low enough to make the inequality above hold. Using algebra to reorient the constraint so that it can be more clearly represented as a latency constraint on the communication substrate ($L_m$) yields Inequality~\ref{eq:pub-sub}.} 
	
	\JH{For an example of how these parameters might be used in a medical application, assume a pulse oximeter device that publishes pulse rate data of the patient. A monitoring application might subscribe to the physiological readings from the pulse oximeter and other devices to support a ``dashboard" that provides a composite view device readings and generates alerts for care-givers based on a collection of physiological parameters. In such a system, the Publisher-Subscriber pattern can be used to communicate information from the pulse oximeter (publisher) to the monitoring application (subscriber). In this description, there is only one subscriber (the monitoring application), but using the Publisher-Subscriber pattern is still appropriate because it allows other subscribers (e.g., a separate alarm application, or a data logging application) to be easily added. Even when there is a single subscriber, the pattern selection emphasizes that the communication is one-way. For publisher local properties, the pulse oximeter can use $N_{pub}$ to indicate the maximum rate at which it will publish blood oxygenation information (Sp$O_2$) and/or pulse rate information. In medical devices in general, this rate would typically be associated with the interval at which meaningful changes can be reflected in the reported physiological parameters. The device designer would use the $L_{pub}$ parameter to specify the maximum length of the delay associated with putting a published value out on the communication substrate that would be acceptable for safe and correct use of the device. On the subscriber side, $N_{sub}$ allows the monitoring application to specify an upper bound on the rate of incoming messages. The value chosen may be derived in part from the execution time needed to compute new information and format resulting data for the display. Intuitively, the $X_{sub}$ allows the monitoring application to indicate how frequently it needs pulse oximetry data to maintain an ``up to date'' display.
		
		The other properties can be used to characterize end-to-end (non-local) timing concerns. To ensure that care-givers receive timely dashboard information and alerts, safety requirements should specify that information is (a) communicated from the pulse oximeter device to the monitoring application with a medically appropriate bound on the latency, and (b) the received physiological parameter is currently an accurate reflection of the patient's physiological state  (i.e., the parameter is “fresh” enough to support the medical intended use). Such requirements would build on the type of non-local timing requirement specified above.
	}
	

	\subsection{Requester-Responder}
	In this pattern, the component with the role requester, sends a $\it request$ message to the component with the role responder. The responder should reply within a time limit as specified by its local timing properties.  
	\JH{In the interoperable medical device domain, this pattern would commonly be used in situations where an
		application needs to ``pull'' information from a medical device (e.g., retrieving the current blood pressure reading from a blood pressure device, retrieving the infusion settings from an infusion pump) or fetching patient data from medical record database.}
	
	This pattern is parameterized with the following local timing properties:
	\begin{itemize}
		\item MinimumSeparation ($N_{req}$): if interval between two consecutive $\it request$ messages is less than $N_{req}$, then the second one is dropped with a \textit{fast Request} failure.
		\item MaximumLatency ($L_{req}$): if the $\it response$ message does not arrive within $L_{req}$ time units, then the request is ended by a \textit{timeout} failure.
		\item MinimumRemainingLifeTime ($R_{req}$): if the $\it response$ message arrives at the requester with a remaining lifetime less than $R_{req}$, then the requester is notified by a \textit{stale data} failure.
		\item MinimumSeparation ($N_{res}$): if the duration between the arrival of two consecutive $\it request$ messages is less than $N_{res}$, then the request is dropped while announcing an  \textit{excess load} failure.
		\item MaximumLatency ($L_{res}$): if the $\it response$ message is not provided within the $L_{res}$ time units, the request is ended by a \textit{timeout} failure.
		\item MinimumRemainingLifeTime ($R_{res}$): if the $\it request$ message with the promised minimum remaining lifetime cannot be responded by the responder, then request is ended by a \textit{data unavailable} failure.
	\end{itemize}
	
	\JH{Compared to the Publisher-Subscriber, several of the timing specification parameters are similar, while others are reoriented to focus on the completion of the end-to-end two-phase "send the request out, get a response back" as opposed to the one-phase goal of the Publisher-Subscriber "send the message out". For example, the minimum separation parameters for both the Requester $N_{req}$ and Responder $N_{res}$ are analogous to the $N_{pub}$ and $N_{sub}$ parameters of the Publish-Subscriber pattern. The MinimumRemainingLifetime concept is extended to include a check not only on the arrival of the request at the responder ($R_{req}$) at the first phase of the communication, but also the a check on the communication from the Responder back to the Requester (at the end of the second phase of the communication).  
	}
	
	Reasoning about the end-to-end two-phase objective of this pattern now needs to consider communication substrate latencies for both the request message $L_{m}$ and the response message $L'_{m}$.  
	The point-to-point \textit{timing requirement} defined for this pattern concerns the delivery of response with lifetime of at least $R_{req}$. So the communication substrate should ensure that ``the sum of [its] maximum latencies to deliver the request to the responder ($L_m$) and the resulting response to the requester ($L'_m$) does not exceed $L_{req}+R_{req}-L_{res}-R_{res}$''
	(inequality \ref{eq:req-res}).
	\\
	\begin{equation}
	\label{eq:req-res}
	L_{req}+R_{req}-L_{res}-R_{res} \geq L_{m}+L'_{m}
	\end{equation}
	
	\JH{
		For an example of how this pattern might be used in a medical application, consider a medical application that requires a blood pressure reading. The application would send a request message to a blood pressure device (with maximum communication substrate latency $L_{m}$), the blood pressure device would either return the most recent reading or acquire a new reading (with latency of $L_{res}$ to obtain the value within the device), and then the device would send a response message to the requester (with maximum communication substrate latency $L_{m}$).  $L_{req}$ expresses the application's requirement on the overall latency of the interaction. The lifetime parameters can be used in a manner similar to that of the Publisher-Subscriber pattern.}
	

	\subsection{Initiator-Executor}
		In this pattern, the component with the initiator role, requests   a specific component with the executor role to execute an action. The executor should provide appropriate acknowledgment message (action succeeded, action failed or action unavailable) within a time limit as specified by its local timing properties.  \JH{In interoperable medical applications, this pattern would typically be used by an application to instruct an actuation device to perform some action. For example, an infusion control application might use the pattern to start or stop the infusion process. A computer-assisted surgery application might use the pattern to instruct the movement of computer-controlled surgical instruments.}
		
		This pattern is parameterized with the following local timing properties:
		\begin{itemize}
			\item MinimumSeparation ($N_{ini}$): if interval between two consecutive $\it initiate$ messages is less than $N_{ini}$, then the second one is dropped with a \textit{fast init} failure.
			\item MaximumLatency ($L_{ini}$): if the $\it acknowledgment$ message does not arrive within $L_{ini}$ time units, then the request is ended by a \textit{timeout} failure.
			\item MinimumSeparation ($N_{exe}$): if the duration between the arrival of two consecutive $\it initiate$ message is less than $N_{exe}$, then the request is dropped while announcing an  \textit{excess load} failure.
			\item MaximumLatency ($L_{exe}$): once the initiating message arrives at the Executor, if the $\it acknowledgment$ message is not provided within the $L_{exe}$ time units, the request is ended by a \textit{timeout} failure.
		\end{itemize}
		
		\JH{$N_{ini}$ can be seen as a guarantee in interface specification on the Initiator to not send messages faster than a certain rate.  $L_{ini}$ is a requirement that the Initiator has on the overall latency of the action. Failure of the system to satisfy this property might lead the initiating component to raise an alarm or take some other corrective action necessary for safety.
			$N_{exe}$ can be understood as a requirement that the Executor has related to its ability to handle action requests.  $L_{exe}$ can be seen as a guarantee that the Executor provides to either perform the action or generate a time out message within a certain time bound.}
		
		The point-to-point \textit{timing requirement} defined for this pattern concerns the delivery of data within maximum latencies.  \JH{The overall latency of the actions of sending the execution command $L_m$, the executor carrying out the action $L_{exe}$, and sending the acknowledge $L'_m$ should not exceed the requirement on the overall latency specified by the Initiator $L_{ini}$.
			Some algebra on this relationship to focus on the requirements of the communication substrate yields the following inequality
			(inequality \ref{eq:ini-exe}).}
		\\
		\begin{equation}
		\label{eq:ini-exe}
		L_{ini}-L_{exe} \geq L_{m}+L'_{m}
		\end{equation}
		
		For example, in the X-Ray/Ventilator synchronization in Section~\ref{sec:x-ray-app}, a coordinating application needs to send commands to both the X-Ray and Ventilator. The Initiator-Executor pattern can be used to control both of these devices with the minimum separation constraints as used in previous patterns. The parameter $L_{ini}$ parameter would be used to specify the requirement on the maximum latency of each interaction.

	\subsection{Sender-Receiver}
		In this pattern, the component with the sender role, sends data to a specific component with the receiver role. 
		The receiver should reply with appropriate acknowledgment message (data accepted or data rejected) within a time limit as specified by its local timing properties. 
		\JH{This pattern is structurally and semantically very similar to the Initiator-Executor pattern. It is only presented as a separate pattern to distinguish the fact that the receiving component only accepts data and, e.g., stores it rather than performing an action that may impact the external environment.}
		
		This pattern is parameterized with the following local timing properties:
		\begin{itemize}
			\item MinimumSeparation ($N_{sen}$): if interval between two consecutive $\it send$ messages is less than $N_{sen}$, then the second one is dropped with a \textit{fast send} failure.
			\item MaximumLatency ($L_{sen}$): if the $\it acknowledgment$ message does not arrive within $L_{sen}$ time units, then the data sent is ended by a \textit{timeout} failure.
			\item MinimumSeparation ($N_{rec}$): if the duration between the arrival of two consecutive $\it send$ messages is less than $N_{rec}$, then the data is dropped while announcing an  \textit{excess load} failure.
			\item MaximumLatency ($L_{rec}$): if the $\it acknowledgment$ message is not provided within the $L_{rec}$ time units, the data sent is ended by a \textit{timeout} failure.
		\end{itemize}
		
		The point-to-point \textit{timing requirement} defined for this pattern concerns the delivery of data within maximum latencies. So the communication substrate should ensure that ``the sum of maximum latencies to deliver the sent data to the reciever ($L_m$) and the resulting acknowledgment to the sender ($L'_m$) does not exceed $L_{sen}-L_{rec}$''
		(inequality \ref{eq:sen-rec}).
		\\
		\begin{equation}
		\label{eq:sen-rec}
		L_{sen}-L_{rec} \geq L_{m}+L'_{m}
		\end{equation}
		
		\JH{In interoperable medical applications, this pattern would typically be used to change the settings on a device or to update a record in some electronic medical record. For example assume a BP monitor that measures blood pressure every 3 minutes periodically. The monitoring application could use the pattern to change the settings on the device to an interval of 1 minute.}
	
	\section{Timed Rebeca and Actor Model}
	Actor model \cite{agha1985actors,hewitt1977viewing} is a concurrent model based on computational objects, called actors, that communicate asynchronously with each other. Actors are encapsulated modules with no shared variables. Each actor has a unique address and mailbox. Messages sent to an actor are stored in its mailbox. Each actor is defined through a set of message handlers to specify the
	actor behavior upon processing of each message.
	
	Rebeca \cite{sirjani2004modeling,Sirjani06} is an actor model language with a Java-like syntax which aims to bridge the gap
	between formal verification techniques and the real-world software engineering of concurrent and distributed applications. Rebeca is supported by
	a robust model checking tool, named Afra\footnote{\url{http://www.rebeca-lang.org/alltools/Afra}}. Timed Rebeca is an extension of Rebeca for modeling and verification of concurrent and distributed systems with timing constraints. As all timing properties in communication patterns are based on time, we use Timed Rebeca for modeling and formal analysis of patterns by Afra.
	Hereafter, we use Rebeca as short for Timed Rebeca in the paper.
	
	\begin{figure}[hbtp]
		\begin{center}
			\fbox{\parbox{\columnwidth-5mm}
				{\footnotesize
					\begin{align*}
					\mathrm{Model} &\Coloneqq~\langle{Class}\rangle^+~\mathrm{Main}~\\
					\mathrm{Main} &\Coloneqq ~\mathsf{main}~ \{\mathrm{InstanceDcl}^*\} \\
					\mathrm{InstanceDcl} &\Coloneqq ~\mathrm{C}~ \mathrm{r}~(\mathrm{\langle r\rangle}^* )~\colon(\mathrm{\langle c\rangle}^*)\\
					\mathrm{Class} &\Coloneqq~ \mathsf{reactiveclass} ~\mathrm{C}~ \{\mathsf{KnownRebecs}~ \mathsf{Vars}~ \mathsf{MsgSrv}^* \}\\
					\mathrm{KnowRebecs} &\Coloneqq~ \mathsf{knownrebecs}~\{\mathrm{VarDcl} \}\\
					\mathrm{Vars} &\Coloneqq \mathsf{statevars}~ \{\mathrm{VarDcl} \}\\
					\mathrm{VarDcl} &\Coloneqq \langle\mathrm{T}~ \mathrm{v}\rangle^* ;\\
					\mathrm{MesgSrv} &\Coloneqq \mathsf{msgsrv} ~\mathrm{m}~(\mathrm{VarDcl})~\{\mathrm{Stmt}^* \}\\
					\mathrm{Stmt}
					&\Coloneqq~
					\mathrm{v}=e;~|~\mathrm{Call};~|~\mathsf{if}(e)~\mathrm{MSt}~[\mathsf{else}~\mathrm{MSt}]~|~ \mathsf{delay}(\mathrm{t});\\
					\mathrm{Call}
					&\Coloneqq~\mathrm{r.m(\langle e\rangle ^*)}[\mathsf{deadline} ~e][\mathsf{after}~ e]\\
					\mathrm{MSt}
					&\Coloneqq~\{\mathrm{Stmt}^*\}~|~ \mathrm{Stmt}
					\end{align*}
			}}
		\end{center}
		\vspace{-2mm} \caption{Abstract syntax of Timed Rebeca. Angle brackets $\langle~\rangle$ denotes meta parenthesis, superscripts $+$ and $*$ respectively are used for repetition of one or more and repetition of zero or more times. Combination of  $\langle~\rangle$ with repetition is used for comma separated list. Brackets $[~]$ are used for optional syntax. Identifiers $C$, $T$, $m$, $v$, $c$, $e$, and $r$ respectively denote class, type, method name, variable, constant, expressions, and rebec name, respectively.
			\label{Fig::TimedRebecaGrammar}}
	\end{figure}
	
	The syntax of Timed Rebeca \cite{TimedRebeca,SirjaniK16} is given in Figure  \ref{Fig::TimedRebecaGrammar}. Each Rebeca model contains \textit{reactive classes} definition and \textit{main} part. Main part contains instances of reactive classes. These instances are actors that are called rebecs. Reactive classes have three parts: \textit{known rebecs}, \textit{state variables} and \textit{message servers}.
	Each rebec can communicate with its known rebecs or itself. Local state of a rebec is indicated by its state variables and received messages which are in the rebec's mailbox. Rebecs are reactive, there is no explicit receive and mailbox manipulation. 
	\EK{Messages trigger the execution of the statements of message
		servers when they are taken from the
		message mailbox. An actor can change its state variables through assignment statements, make decisions through conditional statements, communicates with other actors by sending messages, and performs periodic behavior by sending messages to itself. A message server may have a \emph{nondeterministic assignment} statement which is used to model the nondeterminism in the behavior of a message server. The timing features are \textit{computation time}, \textit{message delivery time} and \textit{message expiration}. Computation time is shown by \textit{delay} statement. Message delivery and expiration times are expressed by associating \textit{after} and \textit{deadline} values with message sending statements}.

	\begin{example}
		A simple request-response system is specified in Timed Rebeca given in Figure \ref{Fig:ُTimedRebecaModelingExample}. This model has two rebecs: $\it req$ is an instances of class $\it Requester$ while $\it res$ is an instance of $\it Responder$. The size of the rebec mailboxes is specified by $(5)$ after the name of classes. These two rebecs are passed as the known rebecs of each other in lines $27-28$ by instantiating. Each class has a message server with the same name as the class name and it acts similar to class constructors in object-oriented languages. Rebec $\it req$ initially sends a message $\it request$ to itself upon executing its constructor. The global time is initially $0$. Rebec $\it req$ takes the message $\it request$ from its mailbox to handle. By executing the statement ${\it delay}(3)$ it is blocked until time $3$. 
		The rebec $\it req$ resumes its execution at time $3$ by sending a ``$\it request$" message to the rebec $\it res$. This message is delivered to the rebec $\it res$ after a delay of $8$, i.e., $11$. At time $11$, rebec $\it res$ takes the message ``$\it request$" from its mailbox. Upon executing its message server, it sends a message ``$\it response$" to $\it req$ which will be delivered at time $16$. Rebec $\it req$ takes the message ``$\it response$" from its mailbox at time $16$ and sends a message ``$\it request$" to itself.  
		
	\end{example}
	\begin{figure}[htbp]
		\lstinputlisting[style=customjava,label=ُTimedRebecaModelingExampl,multicols=2]{./codes/timedExample.rebeca}
		\caption{A simple request-response system in Timed Rebca}
		\label{Fig:ُTimedRebecaModelingExample}
	\end{figure}

	\subsection{Standard Semantics of Timed Rebeca Model}
	\EK{Formal semantics of Timed Rebeca models are presented as labeled transition systems (LTS), i.e., a basic model to define the semantics of reactive systems \cite{baier}. A LTS is defined by the quadruple $\langle S,\rightarrow,L,s_{0} \rangle$ where $S$ is
		a set of states, $L$ a set of labels, $\rightarrow\,\subseteq S\times L\times S$ a set of
		transitions, and $s_{0}$ the initial state. Let
		$s\overto{\alpha}t$ denote $(s,\alpha,t)\in\rightarrow$. Timed Transition System (TTS),  a basic computational model of realtime systems, generalizes the basic computation model of transition systems by associating an interval with each transition to indicate how long a transition takes \cite{DBLP:conf/rex/HenzingerMP91}. In a TTS, transitions are partitioned into two classes: instantaneous transitions (in which time does not progress), and time ticks when the global clock is incremented. These time ticks happen when all participants ``agree'' for time elapse. The standard semantics of Timed Rebeca is defined in terms of TTS as described in \cite{DBLP:journals/scp/KhamespanahKS18}. 
		
		In the following, the brief description of this semantics is presented based on \cite{DBLP:journals/scp/KhamespanahKS18}. Assume that $\mathit{AID}$ is the set of the identifiers of all of the rebecs, $\mathit{MName}$ is the set of the names of all of the message servers, $\mathit{Var}$ is the set of the all of the identifiers of variables, and $\mathit{Val}$ is all of the possible values of variables. Also, $\mathit{Msg} = \mathit{AID} \times \mathit{MName} \times (\mathit{Var} \rightarrow \mathit{Val}) \times \mathbb{N} \times \mathbb{N}$ is the type of messages which are passed among actors. In a message $(i,m,r,a,d) \in \mathit{Msg}$, $i$ is the identifier of the sender of this message, $m$ is the name of its corresponding method, $r$ is a function mapping argument names to their values, $a$ is its arrival time, and $d$ is its deadline. Also assume that the set $\powerset{\mathit{A}}$ is the power set and $\powermultiset{\mathit{A}}$ is the power multiset of its given set $A$.
		
		\begin{defi}
			For a given Timed Rebeca model $\mathcal{M}$, 
			$\mathit{TTS_{\mathcal{M}}}=(S, \rightarrow, Act, s_0)$ is its standard semantics such that:
			\begin{itemize}
				\item The global state of a Timed Rebeca model is represented by a function $s : \mathit{AID} \rightarrow (\mathit{Var} \rightarrow \mathit{Val}) \times \powermultiset{\mathit{Msg}} \times \mathbb{N} \times \mathbb{N} \times \mathbb{N} \cup \{\epsilon\}$, which maps an actor's identifier to the local state of the actor. The local state of an actor is defined by a tuple like $(v, q, \sigma, t, r)$, where $v : \mathit{Var} \rightarrow \mathit{Val}$ gives the values of the state variables of the actor, $q : \powermultiset{\mathit{Msg}}$ is the message bag of the actor, $\sigma : \mathbb{N}$ is the program counter, $t$ is the actor local time, and $r$ is the time when the actor resumes executing remaining statements. The value of $\epsilon$ for the resuming time shows that this actor is not executing a message server. 
				\item In the initial state of the model, for all of the actors, the values of state variables and content of the actor's message bag is set based on the statements of its constructor method, and the program counter is set to zero. The local times of the actors are set to zero and their resuming times are set to $\epsilon$.
				\item The set of actions is defined as $Act = \mathit{MName} \cup \mathbb{N} \cup \{\tau\}.$
				\item The transition relation $\rightarrow \,\subseteq S \times Act \times S$ defines the transitions between states that occur as the results of actors' activities including: taking a message from the mailbox, executing a statement, and progress in time. The latter is only enabled when the others are disabled for all of the actors. This rule performs the minimum required progress of time to enable one of the other rules. 
			\end{itemize}
		\end{defi}
		More details and SOS rules which define these transitions are presented in \cite{DBLP:journals/scp/KhamespanahKS18}.
	}
	
	\begin{example}
		We explain the state-space shown in Figure \ref{fig::TTS} derived partially for the Rebeca model given in Figure \ref{Fig:ُTimedRebecaModelingExample}. The global state is defined by the local states of rebecs and global time. 
		\EK{Note that this presentation has some minor difference in comparison with the structure of the global state in the presented semantics. As local states of all the rebecs in TTS has the same time, in Figure \ref{Fig:ُTimedRebecaModelingExample} one value for \emph{now} is shown as the global time of the system. In addition, the values of the state variables and resuming times are omitted to make the figure simpler.} 
		In the initial state, called $s_1$, only the rebec $\it req$ has a message ``$\it request$" in its bag. By taking this message, we have a transition of type event to the state $s_2$ while the $\it pc$ of rebec is set to $1$ indicating the first statement of the message server ``$\it request$" should be executed. Upon executing the delay statement, the rebec is suspended for 3 units of time. As no rebec can have progress, the global time advances to $3$ and there is a time transition to the state $s_3$. Now, rebec $\it req$ resumes its execution by executing the send statements. This execution makes a state transition to the state $s_4$ by inserting a message ``$\it request$" into the mailbox of $\it res$ by setting its arrival time to $11$. 
	\end{example}

	\begin{figure}
		\centering
		\begin{subfigure}[b]{0.3\textwidth}
			\centering
			\small{
				\tikzstyle{line} = [draw, -stealth, thick]
\begin{tikzpicture}[scale=0.6, transform shape]
\node[rectangle](s1)
{
	\begin{tabular}{|l|l|l|} 
	\hline
	\multicolumn{3}{|c|}{$s_1$}       \\ 
	\hline
	\multirow{2}{*}{\rotatebox[origin=c]{90}{req}} & bag  & [(request(),0,$\infty$)]  \\ 
	\cline{2-3}
	& pc &   \\ 
	\hline
	\multirow{2}{*}{\rotatebox[origin=c]{90}{res}} & bag  & []  \\ 
	\cline{2-3}
	& pc &   \\
	\hline
	\rotatebox[origin=c]{90}{now} & \multicolumn{2}{|c|}{$0$}\\
	\hline
	\end{tabular}
	};
\node[rectangle, below of=s1, yshift=-8em](s2)
{
	\begin{tabular}{|l|l|l|} 
	\hline
	\multicolumn{3}{|c|}{$s_2$}       \\ 
	\hline
	\multirow{2}{*}{\rotatebox[origin=c]{90}{req}} & bag  & []  \\ 
	\cline{2-3}
	& pc &  request:1 \\ 
	\hline
	\multirow{2}{*}{\rotatebox[origin=c]{90}{res}} & bag  & []  \\ 
	\cline{2-3}
	& pc &   \\
	\hline
	\rotatebox[origin=c]{90}{now} & \multicolumn{2}{|c|}{$0$}\\
	\hline
	\end{tabular}
	};
\node[rectangle, below of=s2, yshift=-8em](s3)
{
	\begin{tabular}{|l|l|l|} 
	\hline
	\multicolumn{3}{|c|}{$s_3$}       \\ 
	\hline
	\multirow{2}{*}{\rotatebox[origin=c]{90}{req}} & bag  & []  \\ 
	\cline{2-3}
	& pc &  request:2 \\ 
	\hline
	\multirow{2}{*}{\rotatebox[origin=c]{90}{res}} & bag  & []  \\ 
	\cline{2-3}
	& pc &   \\
	\hline
	\rotatebox[origin=c]{90}{now} & \multicolumn{2}{|c|}{$3$}\\
	\hline
	\end{tabular}
	};
\node[rectangle, below of=s3, yshift=-8em](s4)
{
	\begin{tabular}{|l|l|l|} 
	\hline
	\multicolumn{3}{|c|}{$s_4$}       \\ 
	\hline
	\multirow{2}{*}{\rotatebox[origin=c]{90}{req}} & bag  & []  \\ 
	\cline{2-3}
	& pc &  \\ 
	\hline
	\multirow{2}{*}{\rotatebox[origin=c]{90}{res}} & bag  & [(request(),11,$\infty$)]  \\ 
	\cline{2-3}
	& pc &   \\
	\hline
	\rotatebox[origin=c]{90}{now} & \multicolumn{2}{|c|}{$3$}\\
	\hline
	\end{tabular}
	};

\node (s0) at (0,-13.5) {$\ldots$};
\path [line] (s1) -- node[xshift=-4em] {(request(),0,$\infty$)} (s2);
\path [line] (s2) -- node[xshift=-2em] {time+=3} (s3);
\path [line] (s3) -- node[xshift=-2em] {$\tau(req)$} (s4);
\path [line] (s4) -- node[xshift=-4em] {}(s0);
\end{tikzpicture}
			}
			\caption{TTS\label{fig::TTS}}
		\end{subfigure}
		\begin{subfigure}[b]{0.3\textwidth}
			\centering
			\small{
				\tikzstyle{line} = [draw, -stealth, thick]
\begin{tikzpicture}[scale=0.6, transform shape]
\node[rectangle](s1)
{
	\begin{tabular}{|l|l|l|} 
	\hline
	\multicolumn{3}{|c|}{$t_1$}       \\ 
	\hline
	\multirow{2}{*}{\rotatebox[origin=c]{90}{req}} & bag  & [(request(),0,$\infty$)]  \\ 
	\cline{2-3}
	& now &  0 \\ 
	\hline
	\multirow{2}{*}{\rotatebox[origin=c]{90}{res}} & bag  & []  \\ 
	\cline{2-3}
	& now &  0 \\
	\hline
	\end{tabular}
	};
\node[rectangle, below of=s1, yshift=-8em](s2)
{
	\begin{tabular}{|l|l|l|} 
	\hline
	\multicolumn{3}{|c|}{$t_2$}       \\ 
	\hline
	\multirow{2}{*}{\rotatebox[origin=c]{90}{req}} & bag  & []  \\ 
	\cline{2-3}
	& now &  3 \\ 
	\hline
	\multirow{2}{*}{\rotatebox[origin=c]{90}{res}} & bag  & [(request(),11,$\infty$)]  \\ 
	\cline{2-3}
	& now & 0  \\
	\hline
	\end{tabular}
	};
\node[rectangle, below of=s2, yshift=-8em](s3)
{
	\begin{tabular}{|l|l|l|} 
	\hline
	\multicolumn{3}{|c|}{$t_3$}       \\ 
	\hline
	\multirow{2}{*}{\rotatebox[origin=c]{90}{req}} & bag  & [(response(),16,$\infty$)]  \\ 
	\cline{2-3}
	& now & 3 \\ 
	\hline
	\multirow{2}{*}{\rotatebox[origin=c]{90}{res}} & bag  & []  \\ 
	\cline{2-3}
	& now &  11 \\
	\hline
	\end{tabular}
	};
\node[rectangle, below of=s3, yshift=-8em](s4)
{
	\begin{tabular}{|l|l|l|} 
	\hline
	\multicolumn{3}{|c|}{$t_4$}       \\ 
	\hline
	\multirow{2}{*}{\rotatebox[origin=c]{90}{req}} & bag  & [(request(),16,$\infty$)]  \\ 
	\cline{2-3}
	& now &  16\\ 
	\hline
	\multirow{2}{*}{\rotatebox[origin=c]{90}{res}} & bag  & []  \\ 
	\cline{2-3}
	& now & 11  \\
	\hline
	\end{tabular}
	};
\node (s0) at (0,-13.5) {$\ldots$};
\path [line] (s1) -- node[xshift=-4em] {(request(),0,$\infty$)} (s2);
\path [line] (s2) -- node[xshift=-2em] {(request(),11,$\infty$)} (s3);
\path [line] (s3) -- node[xshift=-2em] {(response(),16,$\infty$)} (s4);
\path [line] (s4) -- node[xshift=-4em] {}(s0);
\end{tikzpicture}
				\caption{FTTS}
				\label{fig::FTTS}
			}
		\end{subfigure}
		\begin{subfigure}[b]{0.3\textwidth}
			\centering
			\small{
				\tikzstyle{line} = [draw, -stealth, thick]
\begin{tikzpicture}[scale=0.6, transform shape]
\node[rectangle](s1)
{
	\begin{tabular}{|l|l|l|} 
	\hline
	\multicolumn{3}{|c|}{$t_3'$}       \\ 
	\hline
	\multirow{2}{*}{\rotatebox[origin=c]{90}{req}} & bag  & [(response(),16,$\infty$)]  \\ 
	\cline{2-3}
	& now &  3\\ 
	\hline
	\multirow{2}{*}{\rotatebox[origin=c]{90}{res}} & bag  & []  \\ 
	\cline{2-3}
	& now & 11  \\
	\hline
	\end{tabular}
	};

\node[rectangle, below of=s1, yshift=-8em](s2)
{
	\begin{tabular}{|l|l|l|} 
	\hline
	\multicolumn{3}{|c|}{$t_4'$}       \\ 
	\hline
	\multirow{2}{*}{\rotatebox[origin=c]{90}{req}} & bag  & [(request(),16,$\infty$)]  \\ 
	\cline{2-3}
	& now &  16\\ 
	\hline
	\multirow{2}{*}{\rotatebox[origin=c]{90}{res}} & bag  & []  \\ 
	\cline{2-3}
	& now & 11  \\
	\hline
	\end{tabular}
	};
\node[rectangle, below of=s2, yshift=-8em](s3)
{
	\begin{tabular}{|l|l|l|} 
	\hline
	\multicolumn{3}{|c|}{$t_5'$}       \\ 
	\hline
	\multirow{2}{*}{\rotatebox[origin=c]{90}{req}} & bag  & []  \\ 
	\cline{2-3}
	& now &  19 \\ 
	\hline
	\multirow{2}{*}{\rotatebox[origin=c]{90}{res}} & bag  & [(request(),27,$\infty$)]  \\ 
	\cline{2-3}
	& now & 11  \\
	\hline
	\end{tabular}
	};
\node[rectangle, below of=s3, yshift=-8em](s4)
{
	\begin{tabular}{|l|l|l|} 
	\hline
	\multicolumn{3}{|c|}{$t_6'$}       \\ 
	\hline
	\multirow{2}{*}{\rotatebox[origin=c]{90}{req}} & bag  & [(response(),32,$\infty$)]  \\ 
	\cline{2-3}
	& now & 19 \\ 
	\hline
	\multirow{2}{*}{\rotatebox[origin=c]{90}{res}} & bag  & []  \\ 
	\cline{2-3}
	& now &  27 \\
	\hline
	\end{tabular}
	};
\node (s0) at (0,2) {$\ldots$};

\path [line] (s0) -- node[xshift=-4em] {}(s1);
\path [line] (s1) -- node[xshift=-4em] {(request())} (s2);
\path [line] (s2) -- node[xshift=-2em] {(request())} (s3);
\path [line] (s3) -- node[xshift=-2em] {(response())} (s4);
\draw[->]  (s4) edge[out=35,in=-35] node[right]{(request())} (s2);
\end{tikzpicture}
				\caption{BFTTS}
				\label{fig::BFTTS}
			}
		\end{subfigure}
		\caption{ TTS, FTTS, and BFTTS for the Timed Rebeca model in Figure
			\ref{Fig:ُTimedRebecaModelingExample}. As rebecs in the example of Figure~\ref{Fig:ُTimedRebecaModelingExample} have no state variables, we have not shown their empty values. The $\infty$ value for the deadlines of message indicates that messages have no deadline.
		}
		\label{fig::FTTSandTTS}
	\end{figure}

	\subsection{Coarse-grained Semantics of Timed Rebeca Models}\label{subsec::FTTS}

	\EK{ Floating Time Transition System (FTTS), 
		introduced in \cite{RTS}, gives a natural event-based semantics for timed actors, providing a significant amount of reduction in the size of transition systems. FTTS is a coarse-grain semantics and contains only event transitions of the standard semantics; each state transition shows the effect of handling of a message by a rebec.
		
		The semantics of Timed Rebeca in FTTS is defined in terms of a transition system. The structure of states in FTTS are the same as that of in TTS; however, the local times of actors in a state can be different. 
		FTTS can be used for the analysis of Timed Rebeca models as there is no shared variables, no blocking send or receive, single-threaded actors, and atomic (non-preemptive) execution of message servers which gives an isolated message server execution. As a result, the execution of a message server of an actor will not interfere with the execution of message servers of other actors. Therefore,  all the statements of a given message server can be executed (including delay statements) during a single transition. 
		
		\begin{defi}
			For a given Timed Rebeca model $\mathcal{M}$, $\mathit{FTTS_{\mathcal{M}}}=(S, \hookrightarrow, Act', s_0)$ is its floating time semantics where $S$ is the set of states, $s_0$ is the initial state, $Act'$ is the set of actions, $\hookrightarrow \,\subseteq S \times Act' \times S$ is the transition relation, described as the following.
			\begin{itemize}
				\item The global state of a Timed Rebeca model $s \in S$ in FTTS is the same as that of in the standard semantics. In comparison with the standard semantics, the values of \textit{program counter} and \textit{resuming time} are set to zero and $\epsilon$, respectively, for all actors in FTTS. As a result,  states of actors in FTTS are in the form of $(v, q, 0, t, \epsilon)$. In addition, there is no guarantee for the local times of actors to be the same, i.e. time \emph{floats} across the actors in the transition system. 
				\item The initial state of a model in FTTS is the same as that of in TTS.
				\item The set of actions is defined as $Act' = \mathit{MName}.$
				\item The transition relation $\hookrightarrow \,\subseteq S \times Act' \times S$ defines the transitions between states that occur as the results of actors' activities including: taking a message from the message box, and executing all of the statements of its corresponding message server. For proposing the formal definition of $\hookrightarrow$, we have to define the notion of \textit{idle} actors. An actor in the state $(v, q, \epsilon, t, r)$ is idle if it is not busy with executing a message server. Consequently, a given state $s$ is idle, if $s(x)$ is idle for every actor $x$. We use the notation $\mathit{idle}(s,x)$ to denote the actor identified by $x$ is idle in state $s$, and $\mathit{idle}(s)$ to denote $s$ is idle. Using these definitions, two states $s, s' \in S$ are in relation $s \longhooktrans{mg} s'$ if and only if the following conditions hold.
				\begin{itemize}
					\item $\mathit{idle}(s) \wedge \mathit{idle}(s')$, and
					\item $\exists\, s_1, s_2, \cdots, s_n \in S, x \in \mathit{AID} \cdot s \longtrans{mg} s_1 \rightarrow \cdots \rightarrow s_n \rightarrow s' \wedge \forall y \in \mathit{AID}/\{x\},~ 1 \leq i \leq n \cdot \neg \mathit{idle}(s_i, x) \wedge \mathit{idle}(s_i, y)$
				\end{itemize}
				
			\end{itemize}
		\end{defi}
		
		More details and SOS rules which define these transitions in FTTS are presented in \cite{RTS} and \cite{Ehsan:Thesis:2018}.
	}


	
	\begin{example}
		The FTTS of the Rebeca model given in Figure \ref{Fig:ُTimedRebecaModelingExample} is given in Figure \ref{fig::FTTS}. 
		\EK{As mentioned before, the values of resuming times and program counters are set to $\epsilon$ and zero in FTTS, so they are not shown in the figure.} 
		In the initial state, called $t_1$, only the rebec $\it req$ has a message ``$\it request$" in its bag, and the local time of all rebecs is $0$. Upon handling the message ``$\it request$", the local time of rebec $\it req$ is progressed to $3$ and a message ``$\it request$" is inserted to the bag of $\it res$ as shown in the state $t_2$. Upon handling the message ``$\it request$" by rebec $\it res$, as its arrival time is $11$, the local time of rebec is progressed to $11$ in the state $t_3$.
	\end{example}
	
	
	\EK{As proved in \cite{RTS}, the FTTS and the TTS of a given Timed Rebeca model are in a weak bisimulation relation. Hence, the FTTS preserves the timing properties of its corresponding TTS, specified by weak modal $\mu$-calculus where the actions are taking messages from the bag of actors.

		There is no explicit time reset operator in Timed Rebeca; so, the progress of time results in an infinite number of states in transition systems of Timed Rebeca models in both TTS and FTTS. However, Timed Rebeca models generally show periodic or recurrent behaviors, i.e. they perform periodic behaviors over infinite time. Based on this fact, in \cite{FTTS} a new notion for equivalence relation between two states is proposed to make transition systems finite, called \textit{shift equivalence relation}. Intuitively, when building the state space there may be a new state $s$ generated in which the local states of rebecs are the same as an already existing state $s'$, and the new state  $s$ only differs from $s'$ in a fixed shift in the value of parts which are related to the time (the same shift value for all timed related values, i.e. now, arrival times of messages, and deadlines of messages). Such new states can be merged with the older ones to make the transition systems bounded. 


		\begin{defi}[shift-equivalence relation]\label{Def::def1}
			Two states $s$ and $s'$ are called \emph{shift equivalent}, denoted by $s\simeq_\delta s'$, if for all the rebecs with identifier $x\in\ID$ there exists $\delta$ such that:
			\begin{enumerate}
				\item 
				$\statevars(s(x))=\statevars(s'(x))$,
				\item 
				$\now(s(x)) = \now(s'(x))+\delta$,
				\item 
				$ \forall m\in \bag(s(x))\Leftrightarrow (\msgsig(m),\arrival(m)+\delta,\deadline(m)+\delta)\in\bag(s'(x)).$
			\end{enumerate}
			
		\end{defi}

		

		The \emph{bounded floating-time transition system}s (BFTTS) $\langle S_f,\hookrightarrow, Act' ,s_{0_f}\rangle$ of a Timed Rebeca model is obtained by merging states of its corresponding FTTS $\langle S,\rightarrow, Act',s_{0} \rangle$ that are in shift equivalence relation. Shift equivalent states are merged into the one that its rebecs have the smallest local times. 
		In \cite{RTS} it is proved that FTTS and its corresponding BFTTS are strongly bisimilar; so, BFTTS of a Timed Rebeca model preserves the timing properties of its corresponding FTTS. 
	}
	
	\begin{example}
		The FTTS of Figure \ref{fig::FTTS} modulo shift-equivalence is partially shown in Figure \ref{fig::BFTTS}. Assume the state $t_6$ in FTTS with the same configuration of $t_6'$ in BFTTS. In the state $t_6$, rebec $\it req$ handles its ``$\it response$" message and as a consequence it sends a ``$\it request$" message to itself and its local clock is advanced to $32$. We call the resulting state $t_7$. The local clocks of rebecs in states $t_7$ and $t_4$ have a $16$-time difference and the values of their state variables and bag contents are equal. So, these two states are shift-equivalent and are merged, resulting the loop in the corresponding BFTTS. 
	\end{example}

	\section{Modeling Patterns in Rebeca}\label{sec::model}
	
	We use the architecture proposed in \cite{7318707} for implementing communication patterns. We will explain the main components of the Publisher-Subscriber pattern as the others are almost the same. 
	As illustrated in Figure \ref{Fig:PubSub}, the pattern provides communication between two application components -- a \emph{client} and a \emph{service},  each of which could be either a software app or a medical device. For example, the client could be a pulse oximeter publishing SPO2 values to a monitoring application service. In our modeling approach, each of the patterns will have a component acting as an interface on either side of the communication that abstracts the lower-level details of a communication substrate. In this case, there is a \textit{PublisherRequester} component that the publisher calls to send a message through the communication substrate and a \textit{SubscriberInvoker} that receives the message from the communication substrate and interacts with the service.
	This structure is common in most communication middleware (e.g., the Java Messaging Service, or OMG's Data Distribution Service) in which APIs are provided to the application level components and then behind the scenes a communication substrate handles marshalling/unmarshalling and moving the message across the network using a particular transport mechanism. 
	In our approach to reason about timing properties, the interface components \textit{PublisherRequester} and \textit{SubscriberInvoker} check the local timing properties related to the client or service side, 
	respectively.
	
	\begin{figure}[htbp]
		\centering
		\frame
		{\includegraphics[width=.73\linewidth]{./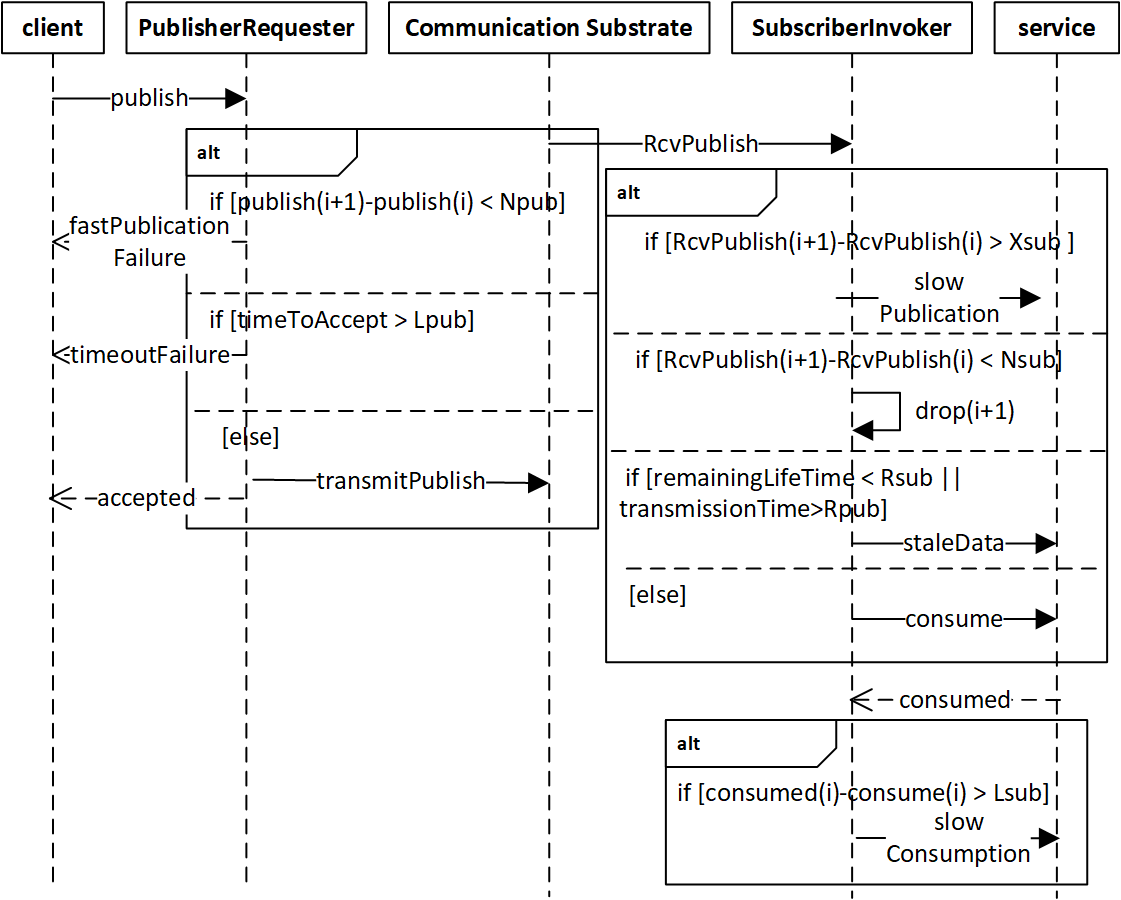}}
		\caption{Publisher-Subscriber pattern sequence diagram: 
			\FG{see Section \ref{subsec::ppparam} for the definitions of parameters $N_{pub}$, $L_{pub}$, $N_{sub}$, $X_{sub}$, $L_{sub}$, and $R_{sub}$.}
		}
		\label{Fig:PubSub}
	\end{figure}
	
	We model each component of this architecture as a distinct actor or rebec in Rebeca. We explain the model of the Publisher-Subscriber pattern in details. Other patterns are modeled using a similar approach.
	
	Figure \ref{fig:Publisher} illustrates \textit{PublisherRequester} reactive class, which is the interface between the client (device/app) and the communication substrate.
	As we see in lines $3$ and $4$, it has two known rebecs: the communication substrate \textit{cs} to which messages are forwarded, and the client \textit{c} to which it will return messages indicating the success/failure status of the communication.  
	We define the state variable $\lastPub$ in line $5$ for saving the time of last publication message. We use this time for computing the interval between two consecutive messages. This rebec has a message server named \textit{publish}.
	We pass \emph{Lm} and \emph{life} parameters through all message servers in the model to compute the delivery time and remaining lifetime of each message. As the communication substrate is impacted by network traffic, the communication delay between the interface and the communication substrate is non-deterministic. To model this communication delay, 
	we define the variable \emph{clientDelay} (in line $11$) with non-deterministic values. The parameters of \emph{Lm} and \emph{life} are updated in lines $12$ and $13$ based on \emph{clientDelay}. This interface is responsible for checking $N_{pub}$ and $L_{pub}$ properties as specified in lines $15$-$23$. To check $N_{pub}$, the interval between two consecutive \emph{publish} messages should be computed by subtracting the current local time of rebec from $\lastPub$. The reserved word \emph{now} represents the local time of the rebec. As this reserved word can not be used directly in expressions, we first assign it to the local variable \emph{time} in line $14$. If both properties are satisfied, it sends a \emph{transmitPublish} message to the communication substrate and an \emph{accepted} message to the client. \MZ{The \emph{accepted} message notifies the \emph{publisher} that the \emph{publish} message was sent to the subscriber through the communication substrate}. These messages are delivered to their respective receivers with a non-deterministic delay, modeled by \emph{clientDelay}, using the statement $\after$.
	It means that the message is delivered to the client after passing this time. In case that the $N_{pub}$ property is violated, it sends a message \emph{fastPublicationFailure} to the client. If the $L_{pub}$ property is violated, it sends a message \emph{timeOutFailure}.

	\begin{figure}[htbp]
		\lstinputlisting[style=customjava,label=Publisher,multicols=2]{./codes/PublisherRequester.rebeca}
		\caption{Modeling publisher interface in Timed Rebeca}
		\label{fig:Publisher}
	\end{figure}

	\textit{Communication substrate} abstracts message passing middleware 
	by specifying the outcomes of message passing.
	To this aim, it may consider priorities among received messages to transmit or assign specific or non-deterministic latency for sending messages. A specification of \textit{communication substrate} reactive class is shown in Figure \ref{fig:CS}. It handles \emph{transmitPublish} messages by sending a $\RcvPublish$ message to its known rebec, a rebec of \textit{SubscriberInvoker} class in line $11$. It considers a non-deterministic communication delay for each message, modeled by the local variable \emph{netDelay} in line $8$. This rebec updates the parameters \emph{Lm} and \emph{lifetime} based on \emph{netDelay} before sending $\RcvPublish$ in lines $9$ and $10$.
	
	\begin{figure}[htbp]
		\lstinputlisting[style=customjava,label=CS]{./codes/CommunicationSubstratePS.rebeca}
		\caption{Modeling communication substrate in Timed Rebeca}
		\label{fig:CS}
	\end{figure}

	The \textit{SubscriberInvoker} reactive class, given in Figure \ref{fig:Subsciber}, is an interface between the communication substrate and the service (device/app). It has only one known rebec that is the destination for the messages of its instances. We define a state variable $\lastPub$ in line $5$ to save the time of the last publication message that arrived in this rebec. This reactive class is responsible for checking $N_{sub}$, $X_{sub}$, $R_{pub}$, $R_{sub}$, and $L_{pub}$ properties (see Subsection \ref{SS:patterns}). Message servers in this rebec are $\RcvPublish$ and $\consume$. 
	%
	The $\RcvPublish$ server begins by modeling the communication delay between the interface and the service by assigning the variable \emph{serviceDelay} (in line $8$) with a non-deterministic value. \FG{As we explained in Section \ref{SS:patterns}, a subscriber states its need about the timing of incoming data by parameters $N_{sub}$ and $X_{sub}$: the rate at which it consumes data.} It computes the interval between two consecutive $\RcvPublish$ messages in lines $9-10$ and then uses this value to check that it be greater than equal to the minimum separation constraint at line $11$ ($N_{sub}$) and less than equal to the maximum separation constraint at line $14$ ($X_{sub}$). \FG{Otherwise, subscriber concludes a too fast or slow publication, respectively. A subscriber also states its need about the freshness of data by timing properties $R_{pub}$ and $R_{sub}$; if data arrives at the subscriber late after its publication (by comparing \emph{Lm} and $R_{pub}$) or its remaining lifetime is less than $R_{sub}$, subscriber concludes that data is stale (line $17$) and sends a failure message to the service. }
	By satisfying the properties, it saves the local time of the actor in $\lastPub$ and sends a $\consume$ message to the service $\after$ a delay of \emph{serviceDelay}. Handling the message reception notification from the service, the message server $\consumed$ checks $L_{sub}$ property in line $29$ and sends a failure to the service if the consumption time exceeded the specified maximum consumption latency $L_{sub}$.

	\begin{figure}[htbp]
		\lstinputlisting[style=customjava,label=Subsciber,multicols=2]{./codes/SubscriberInvoker.rebeca}
		\caption{Modeling subscriber interface in Timed Rebeca}
		\label{fig:Subsciber}
	\end{figure}
	
	The \textit{Service} reactive class, given in Figure \ref{fig:Service}, consumes the publish message with a delay and then assigns $L_m$ to $\it transmissionTime$ indicating the point-to-point message deliver latency from publication of data to its receipt.
	
	\begin{figure}[htbp]
		\lstinputlisting[style=customjava,label=service,multicols=2]{./codes/service.rebeca}
		\caption{Modeling service in Timed Rebeca}
		\label{fig:Service}
	\end{figure}
	
	\subsection{Analysis of Patterns}
	
	Checking of local timing properties is encoded in component models, and failure to satisfy such properties is indicated by notification messages sent to relevant components. Non-local timing properties are specified using assertions in the property language of the model checker and are checked during the model checking process.
	An example of such an assertion (corresponding to the non-local timing property of inequality \ref{eq:pub-sub}) is shown in Figure~\ref{Fig:PubSubProperty}.
	$\latency$ is the name of property and  $\transmission$ is the maximum message delivery time ($L_M$ in inequality \ref{eq:pub-sub}) that is the state variable of the $\service$ class. By using $\assertion$ keyword, the property should be satisfied in all the states of the model.
	
	Table \ref{Table:modelCheck} shows the result of the analysis of patterns using Rebeca. We assign two groups of values to parameters to show the state-space size for when the timing requirement is satisfied and when it is not satisfied.
	As we see the state-space size is smaller when the timing requirement inequality is not satisfied, because in this situation a path in which the property is not satisfied is found and the state-space generation stops.

	\begin{figure}[h]
		\lstinputlisting[style=customjava]{codes/PubSubProperty.property}
		\caption{The timing requirement for the Publisher-Subscriber pattern}
		\label{Fig:PubSubProperty}
	\end{figure}
	
	\begin{table}[h]
		\begin{center}
			\caption{Analysis of patterns in Rebeca}
			\begin{tabular}{|c||c|c|c|c|c|c|}
				\hline
				\multirow{2}{*}{\begin{tabular}{c} Communication\\ pattern \end{tabular}}     & \multirow{2}{*}{Parameters}               & \multicolumn{3}{c|}{\multirow{2}{*}{No. states}} & \multicolumn{2}{c|}{\multirow{2}{*}{Timing requirement}} \\
				&       & \multicolumn{3}{c|}{}                               & \multicolumn{2}{c|}{}                               \\ \hline
				\multirow{2}{*}{\begin{tabular}{c} Publisher\\ Subscriber \end{tabular}} & \begin{tabular}[c]{@{}l@{}}$N_p=4,L_p=6$\\ $R_p=40, N_s=7$\\$L_s=5, X_s=20$\\$R_s=5$\end{tabular}  & \multicolumn{3}{c|}{235}                  & \multicolumn{2}{c|}{$R_{pub}-R_{sub}-L_{pub} \geq L_{m}$}                  \\ \cline{2-7}
				&
				\begin{tabular}[c]{@{}l@{}}$N_p=5,L_p=3$\\ $R_p=20, N_s=4$\\$L_s=7,X_s=12$\\$R_s=10$\end{tabular}                                              &  \multicolumn{3}{c|}{56}                  & \multicolumn{2}{c|}{$R_{pub}-R_{sub}-L_{pub} \ngeq L_{m}$}                  \\ \hline
				\multirow{2}{*}{\begin{tabular}{c}  Requester\\Responder \end{tabular}} & \begin{tabular}[c]{@{}l@{}}$N_r=2,L_r=30$\\ $R_r=15,N_s=7$\\$L_s=5, R_s=10$\end{tabular}                                              & \multicolumn{3}{c|}{205}           & \multicolumn{2}{c|}{$L_{req}+R_{req}-L_{res}-R_{res} \geq L_{m}+L'_{m}$}                  \\ \cline{2-7}
				& \begin{tabular}[c]{@{}l@{}}$N_r=4,L_r=24$\\ $R_r=18, N_s=5$\\$L_s=10, R_s=20$\end{tabular}                                                & \multicolumn{3}{c|}{113}   & \multicolumn{2}{c|}{$L_{req}+R_{req}-L_{res}-R_{res} \ngeq L_{m}+L'_{m}$}                  \\ \hline
				\multirow{2}{*}{\begin{tabular}{c} Sender\\ Receiver \end{tabular}} & \begin{tabular}[c]{@{}l@{}}$N_s=5,L_s=30$\\ $N_r=7,L_r=5$\end{tabular}                                              & \multicolumn{3}{c|}{179}   & \multicolumn{2}{c|}{$L_{sen}-L_{rec} \geq L_{m}+L'_{m}$}                  \\ \cline{2-7}
				&  \begin{tabular}[c]{@{}l@{}}$N_s=4,L_s=6$\\ $N_r=7,L_r=5$\end{tabular}                                               & \multicolumn{3}{c|}{82}                  & \multicolumn{2}{c|}{$L_{sen}-L_{rec} \ngeq L_{m}+L'_{m}$}                  \\ \hline
				\multirow{2}{*}{\begin{tabular}{c} Initiator\\ Executor \end{tabular}} & \begin{tabular}[c]{@{}l@{}}$N_i=3,L_i=25$\\ $N_e=5,L_e=4$\end{tabular}                                              & \multicolumn{3}{c|}{169}   & \multicolumn{2}{c|}{$L_{ini}-L_{exe} \geq L_{m}+L'_{m}$}                  \\ \cline{2-7}
				&  \begin{tabular}[c]{@{}l@{}}$N_i=5,L_e=10$\\ $N_i=5,L_e=4$\end{tabular}                                               & \multicolumn{3}{c|}{49}                  & \multicolumn{2}{c|}{$L_{ini}-L_{exe} \ngeq L_{m}+L'_{m}$}               \\ \hline
			\end{tabular}
			\label{Table:modelCheck}
		\end{center}
	\end{table}
	
	As we assumed that apps/devices satisfy their timing constraints, violation of each inequality shows a run-time failure in which the communication substrate fails to communicate the data quickly enough to meet the interface components timing requirements and so the network is not configured properly. Alternatively, it can be seen as a design error where, e.g., the interface components are making too stringent local timing requirements (e.g., on the freshness).
	
	\subsection{Guidelines on Modeling Composite Medical Systems}\label{subsec::guide}
	Depending on the configuration of a composite medical system, devices and applications connect to each other through a specific pattern. For each connection of two devices/apps, two interface components are needed as summarized in Table \ref{Tab::guideline}. 
	
	\begin{table}[htbp]
		\centering
		\caption{Interface components and communicated messages 
			for each pattern }
		\label{Tab::guideline}
		\begin{tabular}{|c|c|c|}
			\hline
			Pattern     &  Interface  &  Comm.  \\
			& Components & Message \\
			\hline
			\multirow{2}{*}{Publisher-Subscriber}   & \textit{PublisherRequester} & \textit{transmitPublish} \\
			& \textit{SubscriberInvoker} & \\ 
			\hline
			\multirow{2}{*}{Request-Responder} & \textit{RequestRequester} & \textit{transmitRequest}\\
			& \textit{ResponderInvoker} & \textit{transmitResponse}\\
			\hline
			\multirow{2}{*}{Initiator-Executor} & \textit{InitiatorRequester} & \textit{transmitInitiate}\\
			& \textit{ExecutorInvoker} & \textit{transmitAck} \\ 
			\hline
			\multirow{2}{*}{Sender-Receiver} & \textit{SenderRequester} & \textit{transmitSend} \\
			& \textit{ReceiverInvoker} & \\
			\hline
		\end{tabular}
	\end{table}

	In a composite medical system, there may be device/apps that communicate over a shared message passing middleware. In such cases, we should also share the \textit{communication substrate} among the corresponding patterns of device/apps. It is a design decision to be faithful to the patterns (and the system). \FG{As a shared \textit{communication substrate} communicates with all interface components of involved patterns, then we have to pass these components via its constructor in our models. Instead, to make the specification of a shared \textit{communication substrate} independent from its interface components, we use the inheritance concept in Rebeca. We implement a base reactive class for the shared \textit{communication substrate} and all interface components as shown in Figure \ref{fig:Base}, inspired by the approach of  \cite{yousefi2019verivanca}. 
		We define the state variable $\it id$ in line $2$ to uniquely identify rebecs. This class has a lookup method named \emph{find} to get the rebec with a given identifier as its parameter. Thanks to the special statement \emph{getAllActors} (in line $4$), we can get access to all actors extending the \textit{Base}. In the method \emph{find}, we define an array of reactive classes and initiate it by calling \emph{getAllActors}. We iterate over the actors of this array to find the actor with the given identifier (in lines $5-7$).} 
	
	\begin{figure}[h]
		\lstinputlisting[style=customjava]{./codes/base.rebeca}
		\caption{Base Reactive Class}
		\label{fig:Base}
	\end{figure}

	The \textit{communication substrate} reactive class \textit{extends} \textit{Base} class. As illustrated in Figure \ref{fig:csb}, this class has a parameter $\it id$ in its constructor for assigning the $\it id$ variable of the parent class (in line $3$). This class has no known rebec as opposed to the one specified in Figure \ref{fig:CS}. Instead, rebecs append their identifiers to their messages during their communication with the substrate. The communication substrate class uses the \textit{find} method for finding the rebec that wants to send data based on its id  (lines $6$ and $11$).  
	\FG{\textit{Communication substrate} class includes a message server for each communicated message of all patterns as shown in Table \ref{Tab::guideline} (in addition to those for error messages). We can remove the message servers of unused patterns for the sake of readability. However, there is no cost if we do not remove the additional ones as when an event is not triggered, it will not be handled. This class specification can be used as a template even when we have no sharing.}  As \textit{communication substrate} class in Figure \ref{fig:csb} is commonly used by the components of Publisher-Subscriber and Requester-Responder patterns, it has the two message servers \textit{transmitPublish} and \textit{transmitRequest} for Publisher-Subscriber, and \textit{transmitResponse} for Requester-Responder.

	\begin{figure}[htbp]
		\lstinputlisting[style=customjava]{./codes/CaseStudyCS.rebeca}
		\caption{Modeling Communication substrate using inheritance}
		\label{fig:csb}
	\end{figure}

	All interface components that communicate through a shared \textit{communication substrate} should also extend the \textit{Base} class. \FG{For each usage of a pattern, one instance of component interfaces on both side is needed. When an interface component is instantiated, the identifier of its counterpart interface component is set via the constructor. When an interface component sends a message to its counterpart interface component via our proposed communication substrate, it includes the identifier of the counterpart entity.} 
	
	
	Each device/app may use several patterns to communicate with other device/app. Depending on its role in each pattern, we consider a known rebec of the appropriate interface component. To model devices/apps, we only focus on the logic for sending messages through the interface components.
	
	\FG{We consider three types of delays in our specifications; 1) between sender interface components and communication substrate (\textit{after} in the sender interface in the Rebeca model), and 2) between communication substrate and receiver interface components (\textit{after} in the communication substrate in the Rebeca model), 3) between receiver interface component to receiver application (\textit{after} in the receiver interface in the Rebeca model). 
		
		The sender interface component models the behavior of communication driver, and the first type of delay defined in this component models the delay caused by network traffic. The driver retries until it successfully sends its message depending on the traffic. The value of this delay is defined by \emph{clientDelay}  in the sender interface component.
		The second type of delay, defined by \emph{netDelay} in communication substrate, shows the delay of message passing middleware on transferring messages (for example caused by the routing and dispatching algorithms).
		The third type of delay, defined by \emph{serviceDelay} in the receiver interface components, shows the delay caused by the system load.
		When a receiver interface receives messages, it should send them to the application components. Depending on the system load, the operating system allows the interface components to deliver their message to the application components.
		
	}

	\section{State-space Reduction}\label{sec::reduction}
	A medical system is composed of several devices/apps that communicate with each other by using any of communication patterns. With the aim of verifying the timing requirements of medical systems before deployment, we use the Rebeca model checking tool, Afra. As we explained in Section \ref{sec::model}, each communication pattern is at least modeled by five rebecs. It is well-known that as the number of rebecs increases in a model, the state space grows significantly. For a simple medical system composed of two devices that communicate with an app, there are nine rebecs in the model (as communication substrate is in common). In a more complex system, adding more devices may result in state-space explosion, and model checking cannot be applied. We propose a partial order reduction technique for FTTS which merges those states satisfying the same local timing properties of communication patterns. The reduced model preserves the class of timed properties specified by weak modal $\mu$-calculus of the original model where the actions are taking messages from the message bag \cite{RTS}. 
	
	If we use the value of $\now$ in our Rebeca codes, then it is very likely that we encounter an unbounded state space, because the first condition of shift-equivalent relation given in Definition \ref{Def::def1} may not be satisfied. By this condition, two states are shift-equivalent if all state variables of all rebecs have the same value. 
		Here, we suggest a restricted form of using the value of $\now$ 
		by specifying a set of variables called \emph{interval variables}, 
		%
		and we relax the first condition 
		for such variables.
		The model checking tool of Rebeca, Afra, is adjusted to treat these variables differently and hence we prevented generating an unbounded state space.
		
		Let $\Interval$ be the set of interval variables 
		that are defined to only hold time values, i.e., we can only assign the value of $\now$ to these variables at different points of the program. We use these variables to measure the time interval between two execution points of a rebec by comparing the value of $\now$ at these two points. At the first point of execution, we assign $\now$ to the interval variable $x\in\Interval$, and at the second point the expression ``$\now-x$''  measures the time interval between the first and the second point. For instance, the state variable $\lastPub$ of Rebeca class \emph{PublisherRequeter} in Publisher-Subscriber pattern is used to 
		measure the interval between two consecutive $\it publish$ messages. This interval value is used to check local timing properties like $N_{sub}$, $N_{pub}$, and $X_{sub}$. 
		%
		As variables of $\Interval$ can only get the value of $\now$, we relax the first condition of Definition \ref{Def::def1} on these variables, and we treat such variables similar to the local time (see Section \ref{subsec::FTTS}).

We also relax the third condition of Definition \ref{Def::def1} which compares the messages in rebec bags. \rev{The local timing properties may impose restrictions on the transmission time or the remaining life time of data. 
In our implementations, we model the transmission time or remaining life time of data by the parameters of messages, namely $L_m$ and \emph{life}, respectively (see the message server $\RcvPublish$ in the Rebeca class \emph{SubscriberInvoker}). 
To check the timing property on the freshness of data, the parameter \emph{life} is compared with the constant $R_{sub}$, configured as the system parameter, in a conditional statement. This is the only place that the variable is used within its message server. 
So, we can abstract the concrete value of this parameter and only consider the result of the comparison. Instead of passing $\life$ as the parameter, we can pass the Boolean result of $\life<R_{sub}$ instead upon sending the message $\RcvPublish$. We use this interpretation to compare the messages with such parameters in bags. Instead of comparing the values of parameters one-by-one, for the parameters similar to \emph{life} we only consider the result of the comparison. If the values of \emph{life} for two $\RcvPublish$ messages either both satisfy or violate the condition $\life<R_{sub}$ while other parameters are equal, these messages can be considered equivalent as both result the same set of statements to be executed (irrespective to the value of this parameter). This idea can also be applied to the message $\it RcvResponse$ in the Request-Responder pattern, parameterized by \emph{life} which is compared to the local timing property $R_{res}$ in its message server. Formally speaking,} we identify those messages, denoted by $\Msg_{ex}$, whose parameters are only used in conditional statements (if-statements). We use the conditions in the if-statements for data abstraction. We can abstract the concrete values of the parameters and only consider the result of evaluation of conditions. 
In Definition \ref{Def::def2} (relaxed-shift equivalence relation), for simplicity, we assume that each message $m\in\Msg_{ex}$ has only one parameter that we abstract its value, denoted by ${\it msgpar}_{ex}(m)$, 
based on the result of one condition, denoted as ${\it cond}(m)$. We also assume that the message has another parameter that its value is denoted by ${\it msgpar}_{\overline{ex}}(m)$. By ${\it cond}(m)({\it msgpar}_{\it ex}(m))$, we mean the result of evaluation of the condition that is checked by the message server $m$ over its specific parameter of $m$. The concrete values of a message parameter can be abstracted if it is only used in the conditions of if-statements. The concrete value of a parameter is needed if it is used in other statements (e.g., assignment or send statement in Timed Rebeca).  \rev{To ensure the soundness of our abstraction, we limit ${\it cond}(m)$ to propositional logic over the comparison of ${\it msgpar}_{\it ex}$ with constants. Considering more complicated conditions is among of our future work.} In practice, we can find such parameters and their corresponding conditions through a static analysis of the message server or ask the programmer to identify them.

		\begin{defi}[relaxed-shift equivalence relation]\label{Def::def2}
			Two semantic states $s$ and $s'$, denoted by $s\sim_{\delta}s'$, are relaxed-shift equivalent if there exists $\delta$ such that for all the rebecs with identifier $x\in\ID$ :
			\begin{enumerate}
				
				\item {\small\[\begin{array}{l}\forall v\in\Var\setminus{\Interval} 
					\cdot\statevars(s(x))(v) =\statevars(s'(x))(v),\\
					\forall v\in \Interval\cap{\it Dom}(s(x))\Rightarrow\statevars(s(x))(v) = \statevars(s'(x))(v)+\delta. \end{array}\]}
				
				\item  {\small$\now(s(x)) = \now(s'(x))+\delta$}.
				
				\item  {\small \[\begin{array}{l}\forall m\in \bag(s(x)) \,\wedge 
					m\not\in \Msg_{ex} \Leftrightarrow 
					(\msgsig(m),\arrival(m)+\delta,\deadline(m)+\delta)\in\bag(s'(x)),\vspace*{1mm}\\ \forall m\in \bag(s(x)) \,\wedge 
					m\in \Msg_{ex} \Leftrightarrow 
					\exists m'\in\bag(s'(x)) \cdot 
					{\it Type}(\msgsig(m))={\it Type}(\msgsig(m')) \,\wedge \\\hspace*{1cm} 
					\arrival(m') =\arrival(m)+\delta \,\wedge \, \deadline(m')= \deadline(m)+\delta \,\wedge \\\hspace*{1.2cm} {\it msgpar}_{\overline{ex}}(m)={\it msgpar}_{\overline{ex}}(m') \, \wedge \,  ({\it cond}(m)({\it msgpar}_{\it ex}(m)) \Leftrightarrow {\it cond}(m')({\it msgpar}_{ex}(m'))).
					\end{array}\]}
			\end{enumerate}
			
		\end{defi}

		We consider Timed Rebeca models that in their message servers $\now$ can be only accessed for updating variables of $\Interval$ or used in expressions like ``$\now-x$'' (where $x\in\Interval$) for computing an interval value. 
		We reduce FTTSs of such models by merging states that are relaxed-shift equivalent. The following theorem shows that an FTTS modulo relaxed-shift equivalence preserves the properties of the original one.

		\begin{thm}\label{Th::preserve}
			For the given FTTS $\langle S,s_0,\rightarrow\rangle$, assume the states $s,s'\in S$ such that $s\sim_{\delta}s'$. If $(s,m,s^\ast)\in\rightarrow$, then there exists $s^{\ast\ast}$ such that $(s',m,s^{\ast\ast})\in\rightarrow$ and $s^\ast\sim_{\delta}s^{\ast\ast}$.
		\end{thm}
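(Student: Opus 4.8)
The plan is to show that $\sim_\delta$ is one half of a strong bisimulation on the FTTS; the theorem is precisely this simulation step for a fixed $\delta$. An FTTS transition $s\longhooktrans{m}s^\ast$ is, by definition of $\hookrightarrow$, realised by a single actor $x$ that takes a message with name $m$ from $\bag(s(x))$ and then runs the corresponding message server \emph{atomically} (all of its statements, including $\mathsf{delay}$s, inside one macro-step). So I would first locate the matching transition out of $s'$: by the third clause of Definition~\ref{Def::def2} the consumed message of $s(x)$ has a partner $m'\in\bag(s'(x))$ --- the exact $\delta$-shift of it when $m\notin\Msg_{ex}$, or, when $m\in\Msg_{ex}$, a message of the same type with $\delta$-shifted arrival and deadline, identical non-abstracted parameter $\mathit{msgpar}_{\overline{ex}}$, and $\mathit{cond}(m)(\mathit{msgpar}_{ex}(m))\Leftrightarrow\mathit{cond}(m')(\mathit{msgpar}_{ex}(m'))$. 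Since every FTTS state is idle by construction, the mere presence of $m'$ makes a transition $s'\longhooktrans{m}s^{\ast\ast}$ enabled (the handling time $\max(\now(s'(x)),\arrival(m'))$ carries the same shift $\delta$ as $\max(\now(s(x)),\arrival(m))$, so $m'$ is not expired either); what remains is to check that the two message-server runs end in $\sim_\delta$-related states.

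The core is a lock-step induction over the statements of the message server, carrying an invariant that refines $\sim_\delta$ to \emph{configurations}: at every intermediate point the two runs have the same program counter for $x$ and the same control flow so far; equal values for all non-time-valued state variables, local variables and temporaries, and $\delta$-shifted values for the time-valued ones (the $\Interval$ state variables, any local that received the value of $\now$, the local time, the resuming time); and $\delta$-shifted bags for all actors with the $\Msg_{ex}$-aware correspondence of Definition~\ref{Def::def2}. For the inductive step one inspects each statement form. Because the class of models under consideration mentions $\now$ only in the two permitted ways --- assigning it to an $\Interval$ variable, or inside an expression ``$\now-x$'' with $x\in\Interval$ --- every non-time value computed by an expression coincides in the two runs (``$\now-x$'' in particular, since both operands carry the same shift $\delta$); hence $\mathsf{delay}(t)$ advances both local times by the same $t$, and a send statement $\mathrm{r.n(\ldots)}[\mathsf{deadline}\,e][\mathsf{after}\,e']$ produces messages whose recipient, signature and non-abstracted arguments coincide and whose arrival and deadline are the identically-computed durations added to $x$'s $\delta$-shifted local time, i.e.\ again $\delta$-shifted. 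Conditionals branch identically: either the guard is over coinciding data, or it is an $\mathsf{if}$ on an abstracted parameter, in which case the third clause of Definition~\ref{Def::def2} forces the two guards to have the same truth value --- here one uses that $\mathit{cond}(m)$ is restricted to propositional combinations of comparisons of $\mathit{msgpar}_{ex}(m)$ with constants, so its value is determined by exactly the data that $\sim_\delta$ records. When the server terminates, the invariant hands back $s^\ast\sim_\delta s^{\ast\ast}$: $x$'s variables were updated identically up to the shift on $\Interval$ variables, $x$'s local time is shifted, $x$'s bag lost corresponding messages, every recipient (including $x$) gained corresponding ($\Msg_{ex}$-aware) shifted messages, and every other actor was untouched in both runs and so stays $\delta$-related. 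Running the same argument with $s,s'$ swapped and $-\delta$ in place of $\delta$ gives the converse simulation, so $\bigcup_\delta\sim_\delta$ is a strong bisimulation, which is what the reduction needs.

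The step I expect to be the main obstacle is getting this intermediate configuration-invariant exactly right and, within it, discharging the ``no-leakage'' property for $\Msg_{ex}$ parameters: one must argue from the syntactic side-condition on message servers (an abstracted parameter occurs only inside its one recorded conditional) that the concrete value of such a parameter never flows into an assignment, an outgoing argument, or a $\mathsf{deadline}/\mathsf{after}$ expression, and therefore never has to be tracked concretely by $\sim_\delta$. The remaining care is bookkeeping --- the FTTS time-jump on message receipt, deadline/expiration checks, self-sends, nondeterministic assignments (resolved by the same choice in both runs), and the multiset nature of bags --- none of which is conceptually hard once the invariant is in place.
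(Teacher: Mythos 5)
Your proposal is correct and follows essentially the same route as the paper's proof: locate the $\delta$-shifted (or $\Msg_{ex}$-equivalent) partner message $m'$ in $\bag(s'(x))$ via the third clause of Definition~\ref{Def::def2}, argue that the restriction on uses of $\now$ and the condition-equivalence for abstracted parameters force the two message-server runs to execute the same statements with values that agree up to a $\delta$-shift on time-valued data, and conclude that all three clauses of $\sim_\delta$ hold for $s^\ast$ and $s^{\ast\ast}$. Your version merely makes explicit, as a lock-step statement-by-statement invariant, what the paper asserts informally (``the same statements are executed, so all variables are updated by the same expressions''), and the ``no-leakage'' worry you flag is discharged by the paper's definition of $\Msg_{ex}$ as messages whose abstracted parameters occur only in the recorded conditionals.
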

		
		\begin{proof}
			Assume an arbitrary transition $(s,m,s^\ast)\in\rightarrow$ that denotes handling the message $m$ by the rebec $i$. Based on the third condition of Definition \ref{Def::def2}, 
			two cases can be distinguished:
			\begin{itemize}
				\item $m\not\in \Msg_{ex}$: there exists message $m'\in\bag(s'(i))$ such that $m'= (\msgsig(m),\\\arrival(m)+\delta,\deadline(m)+\delta)$. 
				The local time of rebec $i$ in both states $s$ and $s'$ are progressed by executing delay statements in the message servers $m$ and $m'$ which is the same and hence, their local timers have still $\delta$-difference in the resulting states $s^\ast$ and $s^{\ast\ast}$. So, the second condition is satisfied (result $\dagger$). Due to the execution of the same message server, all variables are updated by the same expressions. Based on the constraint on the models and the assumption $s\sim_{\delta}s'$, we conclude that all the variables except for $\Interval$ have the same values (defined by the same expressions on un-timed variables). The values of $\Interval$ are updated to $\now$ and, by the result $\dagger$, they still have $\delta$-difference. 
				So, $s^{\ast}$ and $s^{\ast\ast}$ satisfy the first condition. The messages sent to other rebecs during handling $m$ and $m'$ are sent at the same point with the same parameters (defined by expressions on untimed-variables). As their local timers have $\delta$-difference, the arrival and deadline of sent messages have $\delta$-difference. So, the third condition is also satisfied.
				\item $m\in \Msg_{ex}$: there exists message $m'\in\bag(s'(i))$ with the same message name of $m$, i.e., ${\it Type}(\msgsig(m))$,  arrival time $\arrival(m)+\delta$, and deadline $\deadline(m)+\delta$. The first assumption of Definition \ref{Def::def2} together with the assumptions $ {\it msgpar}_{\overline{ex}}(m)={\it msgpar}_{\overline{ex}}(m')$ and $({\it cond}(m)({\it msgpar}_{\it ex}(m)) \Leftrightarrow {\it cond}(m')({\it msgpar}_{ex}(m')))$ result that the same statements are executed by the rebec $i$ during handling $m$ and $m'$. The variables $x\in\Interval$ are used in expressions as $\now-x$ and so  $\now(s(i))-\statevars(s(i))(x) = \now(s'(i))-\statevars(s'(i))(x)$. With a similar discussion in the previous case, the local time of rebec $i$ in both states $s$ and $s'$ are progressed by executing the same delay statements and hence, their local timers have still $\delta$-difference in the resulting states $s^\ast$ and $s^{\ast\ast}$. So, the second condition is satisfied (result $\ddagger$). Due to the execution of the same statements, all variables are updated by the same expressions. Based on the constraint on the models and the assumption $s\sim_{\delta}s'$, we conclude that all the variables except of $\Interval$ have the same values (defined by the same expressions on un-timed variables). The values of $\Interval$ are updated to $\now$ and, by the result $\ddagger$, they still have $\delta$-difference. So, $s^{\ast}$ and $s^{\ast\ast}$ satisfy the first condition. The messages sent to other rebecs during handling $m$ and $m'$ are sent at the same point with the same parameters (defined by expressions on untimed-variables). As their local timers have $\delta$-difference, the arrival and deadline of sent messages have $\delta$-difference. So, the third condition is also satisfied. \qedhere
			\end{itemize}
		\end{proof}
	
	\begin{cor}\label{Co::preserve}
		For the given FTTS $\langle S,s_0,\rightarrow\rangle$, assume the states $s,s'\in S$ such that $s\sim_{\delta}s'$. Then, $s$ and $s'$ are strongly bisimilar.
	\end{cor}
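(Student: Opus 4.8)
The plan is to package Theorem~\ref{Th::preserve} as the transfer condition of an explicit strong bisimulation. Take
\[
R \;=\; \{\,(p,q)\in S\times S \mid \exists\,\delta.\ p\sim_{\delta}q\,\},
\]
the union of the relations $\sim_{\delta}$ over all shifts $\delta$. Since $(s,s')\in R$ by hypothesis, it suffices to prove that $R$ is a strong bisimulation on the FTTS $\langle S,s_0,\rightarrow\rangle$: then $s$ and $s'$ are related by a strong bisimulation and hence strongly bisimilar.

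First I would observe that $R$ is symmetric. Reading Definition~\ref{Def::def2}, every clause that witnesses $p\sim_{\delta}q$ is an equation or a biconditional (the split on $v\in\Var\setminus\Interval$ versus $v\in\Interval$, the clause $\now(p(x))=\now(q(x))+\delta$, and the two bag clauses, including the one for $m\in\Msg_{ex}$), and each of them reads as a witness for $q\sim_{-\delta}p$ once $\delta$ is replaced by $-\delta$; as the definition places no sign restriction on $\delta$, this gives $(q,p)\in R$. Consequently the two halves of the bisimulation game are mirror images of each other and it is enough to establish one of them.

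The remaining half is exactly Theorem~\ref{Th::preserve}: if $(p,q)\in R$ with witness $\delta$ and $(p,m,p^{\ast})\in\rightarrow$, the theorem yields $q^{\ast\ast}$ with $(q,m,q^{\ast\ast})\in\rightarrow$ and $p^{\ast}\sim_{\delta}q^{\ast\ast}$, i.e. $(p^{\ast},q^{\ast\ast})\in R$, while the label $m$ is literally the same on both moves. Applying this to the given pair (with shift $\delta$) matches moves of $s$, and applying it to the symmetric pair $(s',s)$ (with shift $-\delta$) matches moves of $s'$. Hence $R$ is a strong bisimulation containing $(s,s')$, which proves the corollary.

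There is no real obstacle here; the mathematical content lives entirely in Theorem~\ref{Th::preserve}. The only points that deserve an explicit word are (i) the symmetry of $R$ --- immediate from the equational/biconditional shape of Definition~\ref{Def::def2}, but needed so that a single application of the theorem suffices per direction --- and (ii) that the matched transitions carry identical labels, which is already part of the statement of Theorem~\ref{Th::preserve}. No reasoning about the initial state $s_0$ is required, since strong bisimilarity of two states does not depend on reachability.
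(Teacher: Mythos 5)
Your proof is correct and follows essentially the same route as the paper: exhibit a relation built from $\sim_{\delta}$ containing $(s,s')$ and discharge the transfer conditions by invoking Theorem~\ref{Th::preserve}. The only difference is presentational --- you close $R$ under all shifts and argue symmetry explicitly via $\delta\mapsto-\delta$, whereas the paper fixes $\delta$ and handles the reverse direction with ``the same discussion holds''; your version makes that implicit step precise.
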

	\begin{proof}
		To show that $s$ and $s'$ are strongly bisimilar, we construct a strong bisimulation relation $R$ such that $(s,s')\in R$. Construct $R=\{(t,t')
		\mid t\sim_{\delta}t'\}$. We show that $R$ satisfies the transfer conditions of strong bisimilarity. For an arbitrary pair $(t,t')$, we must show that\begin{itemize}
			\item If $(t,m,t^\ast)\in \rightarrow$, then there exists $t^{\ast\ast}$ such that $(t',m,t^{\ast\ast})\in \rightarrow$ and $(t^\ast,t^{\ast\ast})\in R$; 
			\item If $(t',m,t^{\ast\ast})\in \rightarrow$, then there exists $t^{\ast}$ such that $(t,m,t^\ast)\in \rightarrow$ and $(t^\ast,t^{\ast\ast})\in R$.
		\end{itemize}
		If $(t,m,t^\ast)\in \rightarrow$, by Theorem \ref{Th::preserve}, there exists $t^{\ast\ast}$ such that $(t',m,t^{\ast\ast})\in \rightarrow$ and $t^\ast\sim_{\delta}t^{\ast\ast}$. By the construction of $R$, $t^\ast\sim_{\delta}t^{\ast\ast}$ implies that $(t^\ast,t^{\ast\ast})\in R$. The same discussion holds when $(t',m,t^{\ast\ast})\in \rightarrow$. Concluding that $R$ is a strong bisimulation. Trivially $(s,s')\in R$.  
	\end{proof}

	The relaxed-shift equivalence preserves the conditions of shift equivalence on all variables except the time related variables, i.e., $\Interval$. 
	Furthermore, it preserves the conditions of shift equivalence on all messages in the bag except for messages $\Msg_{ex}$. But the condition on parameters of ${\it msgpar}_{\it ex}$, i.e.,  $({\it cond}(m)({\it msgpar}_{\it ex}(m)) \Leftrightarrow {\it cond}(m')({\it msgpar}_{ex}(m')))$ ensures that the same statements of the corresponding message server will be executed. Therefore, by Corollary \ref{Co::preserve} any FTTS modulo relaxed-shift equivalence is strongly bisimilar to its original FTTS, and it not only preserves the local timing properties (those properties checked on variables of $\Interval$ in model specification like $\now-\lastPub<N_{\it pub}$) of the original one but also preserves the timing properties defined on events (taking messages from the bag).

	\begin{example}
		Figure \ref{Fig:bftts} shows part of the FTTS for the Publisher-Subscriber pattern. As we see all local times and the values of lastPub are shifted one unit in states $s_{17}$ and $s_{20}$, so the first and the second conditions of Definition \ref{Def::def2} are satisfied. The message contents of their bags are equal in all rebecs except for the rebec $si$. This rebec has a $\it RcvPublish$ message in its bag in both states that their $\life$ values are different but both of them are greater than the $R_{sub}$ value. So the third condition of Definition \ref{Def::def2} is satisfied too and the states are relaxed-shift equivalent and we can merge them. Respectively, we can merge states $s_{22}$ with $s_{25}$ and $s_{27}$ with $s_{30}$. We remark that these states are not merged in the corresponding BFTTS. 
	\end{example}
	
	\begin{figure}[htbp]
		\centering
		\tikzstyle{line} = [draw, -stealth]
\tikzstyle{block} = [rectangle, draw, text width=20em, text centered, rounded corners, minimum height=2em, node distance = 12em]
\tikzstyle{first} = [rectangle, draw, text width=8em, text centered, rounded corners, minimum height=2em, node distance = 15em]
\tikzstyle{dotted} = [draw, densely dotted, -stealth]
\begin{tikzpicture}[scale=0.45, transform shape]
\node[block](s16)
{
	\begin{tabularx}{\textwidth}{l|l|X} 
	\multicolumn{3}{c}{$s_{16}$}       \\ 
	\hline
	\multirow{2}{*}{\rotatebox[origin=c]{90}{c}} & bag  & []  \\ 
	\cline{2-3}
	& state variables &  \\
	\cline{2-3}
	& now & 8 \\ 
	\hline
	\multirow{2}{*}{\rotatebox[origin=c]{90}{pr}} & bag  & []  \\ 
	\cline{2-3}
	& state variables & lastPub=6 \\
	\cline{2-3}
	& now & 8 \\ 
	\hline
	\multirow{2}{*}{\rotatebox[origin=c]{90}{cs}} & bag  & [(publish(8,28),8,$\infty$)]  \\ 
	\cline{2-3}
	& state variables &  \\ 
	\cline{2-3}
	& now & 8 \\
	\hline
	\multirow{2}{*}{\rotatebox[origin=c]{90}{si}} & bag  & []  \\ 
	\cline{2-3}
	& state variables & lastPub=0 \\ 
	\cline{2-3}
	& now & 8 \\
	\hline
	\multirow{2}{*}{\rotatebox[origin=c]{90}{s}} & bag  & []  \\ 
	\cline{2-3}
	& state variables & \\ 
	\cline{2-3}
	& now & 8 \\
	\end{tabularx}
};
\node[block, xshift=22em](s17)
{
	\begin{tabularx}{\textwidth}{l|l|X} 
	\multicolumn{3}{c}{$s_{17}$}       \\ 
	\hline
	\multirow{2}{*}{\rotatebox[origin=c]{90}{c}} & bag  & [(accepted(),8,$\infty$)]  \\ 
	\cline{2-3}
	& state variables &  \\
	\cline{2-3}
	& now & 8 \\ 
	\hline
	\multirow{2}{*}{\rotatebox[origin=c]{90}{pr}} & bag  & []  \\ 
	\cline{2-3}
	& state variables & lastPub=6 \\
	\cline{2-3}
	& now & 8 \\ 
	\hline
	\multirow{2}{*}{\rotatebox[origin=c]{90}{cs}} & bag  & []  \\ 
	\cline{2-3}
	& state variables &  \\ 
	\cline{2-3}
	& now & 8 \\
	\hline
	\multirow{2}{*}{\rotatebox[origin=c]{90}{si}} & bag  & [(RcvPublish(8,27),8,$\infty$)]  \\ 
	\cline{2-3}
	& state variables & lastPub=0 \\ 
	\cline{2-3}
	& now & 8 \\
	\hline
	\multirow{2}{*}{\rotatebox[origin=c]{90}{s}} & bag  & []  \\ 
	\cline{2-3}
	& state variables & \\ 
	\cline{2-3}
	& now & 8 \\
	\end{tabularx}
};
\node[block, below of=s16, yshift=-15em, xshift=12em](s22)
{
	\begin{tabularx}{\textwidth}{l|l|X} 
	\multicolumn{3}{c}{$s_{22}$}       \\ 
	\hline
	\multirow{2}{*}{\rotatebox[origin=c]{90}{c}} & bag  & []  \\ 
	\cline{2-3}
	& state variables &  \\
	\cline{2-3}
	& now & 8 \\ 
	\hline
	\multirow{2}{*}{\rotatebox[origin=c]{90}{pr}} & bag  & []  \\ 
	\cline{2-3}
	& state variables & lastPub=6 \\
	\cline{2-3}
	& now & 8 \\ 
	\hline
	\multirow{2}{*}{\rotatebox[origin=c]{90}{cs}} & bag  & []  \\ 
	\cline{2-3}
	& state variables &  \\ 
	\cline{2-3}
	& now & 8 \\
	\hline
	\multirow{2}{*}{\rotatebox[origin=c]{90}{si}} & bag  & [(RcvPublish(8,27),8,$\infty$)]  \\ 
	\cline{2-3}
	& state variables & lastPub=0 \\ 
	\cline{2-3}
	& now & 8 \\
	\hline
	\multirow{2}{*}{\rotatebox[origin=c]{90}{s}} & bag  & []  \\ 
	\cline{2-3}
	& state variables & \\ 
	\cline{2-3}
	& now & 8 \\
	\end{tabularx}
};
\node[block, below of=s22, yshift=-15em](s27)
{
	\begin{tabularx}{\textwidth}{l|l|X} 
	\multicolumn{3}{c}{$s_{27}$}       \\ 
	\hline
	\multirow{2}{*}{\rotatebox[origin=c]{90}{c}} & bag  & []  \\ 
	\cline{2-3}
	& state variables &  \\
	\cline{2-3}
	& now & 9 \\ 
	\hline
	\multirow{2}{*}{\rotatebox[origin=c]{90}{pr}} & bag  & []  \\ 
	\cline{2-3}
	& state variables & lastPub=6 \\
	\cline{2-3}
	& now & 9 \\ 
	\hline
	\multirow{2}{*}{\rotatebox[origin=c]{90}{cs}} & bag  & []  \\ 
	\cline{2-3}
	& state variables &  \\ 
	\cline{2-3}
	& now & 9 \\
	\hline
	\multirow{2}{*}{\rotatebox[origin=c]{90}{si}} & bag  & []  \\ 
	\cline{2-3}
	& state variables & lastPub=9 \\ 
	\cline{2-3}
	& now & 9 \\
	\hline
	\multirow{2}{*}{\rotatebox[origin=c]{90}{s}} & bag  & [(publish(9),9,$\infty$)]  \\ 
	\cline{2-3}
	& state variables & \\ 
	\cline{2-3}
	& now & 9 \\
	\end{tabularx}
};
\path [line] (s16) -- node[xshift=-6em] {(publish(8,28),8,$\infty$)} (s22);
\path [line] (s17) -- node[xshift=-4em] {(accepted(),8,$\infty$)} (s22);
\path [line] (s22) -- node[xshift=-6em] 
{(RcvPublish(8,27),8,$\infty$)} (s27);
\node[block, xshift=45em](s6)
{
	\begin{tabularx}{\textwidth}{l|l|X} 
	\multicolumn{3}{c}{$s_{6}$}       \\ 
	\hline
	\multirow{2}{*}{\rotatebox[origin=c]{90}{c}} & bag  & []  \\ 
	\cline{2-3}
	& state variables &  \\
	\cline{2-3}
	& now & 9 \\ 
	\hline
	\multirow{2}{*}{\rotatebox[origin=c]{90}{pr}} & bag  & []  \\ 
	\cline{2-3}
	& state variables & lastPub=7 \\
	\cline{2-3}
	& now & 9 \\ 
	\hline
	\multirow{2}{*}{\rotatebox[origin=c]{90}{cs}} & bag  & [(publish(9,29),9,$\infty$)]  \\ 
	\cline{2-3}
	& state variables &  \\ 
	\cline{2-3}
	& now & 9 \\
	\hline
	\multirow{2}{*}{\rotatebox[origin=c]{90}{si}} & bag  & []  \\ 
	\cline{2-3}
	& state variables & lastPub=0 \\ 
	\cline{2-3}
	& now & 9 \\
	\hline
	\multirow{2}{*}{\rotatebox[origin=c]{90}{s}} & bag  & []  \\ 
	\cline{2-3}
	& state variables & \\ 
	\cline{2-3}
	& now & 9 \\
	\end{tabularx}
};
\node[block, xshift=67em](s20)
{
	\begin{tabularx}{\textwidth}{l|l|X} 
	\multicolumn{3}{c}{$s_{20}$}       \\ 
	\hline
	\multirow{2}{*}{\rotatebox[origin=c]{90}{c}} & bag  & [(accepted(),9,$\infty$)]  \\ 
	\cline{2-3}
	& state variables &  \\
	\cline{2-3}
	& now & 9 \\ 
	\hline
	\multirow{2}{*}{\rotatebox[origin=c]{90}{pr}} & bag  & []  \\ 
	\cline{2-3}
	& state variables & lastPub=7 \\
	\cline{2-3}
	& now & 9 \\ 
	\hline
	\multirow{2}{*}{\rotatebox[origin=c]{90}{cs}} & bag  & []  \\ 
	\cline{2-3}
	& state variables &  \\ 
	\cline{2-3}
	& now & 9 \\
	\hline
	\multirow{2}{*}{\rotatebox[origin=c]{90}{si}} & bag  & [(RcvPublish(8,28),9,$\infty$)]  \\ 
	\cline{2-3}
	& state variables & lastPub=0 \\ 
	\cline{2-3}
	& now & 9 \\
	\hline
	\multirow{2}{*}{\rotatebox[origin=c]{90}{s}} & bag  & []  \\ 
	\cline{2-3}
	& state variables & \\ 
	\cline{2-3}
	& now & 9 \\
	\end{tabularx}
};
\node[block, below of=s6, yshift=-15em, xshift=12em](s25)
{
	\begin{tabularx}{\textwidth}{l|l|X} 
	\multicolumn{3}{c}{$s_{25}$}       \\ 
	\hline
	\multirow{2}{*}{\rotatebox[origin=c]{90}{c}} & bag  & []  \\ 
	\cline{2-3}
	& state variables &  \\
	\cline{2-3}
	& now & 9 \\ 
	\hline
	\multirow{2}{*}{\rotatebox[origin=c]{90}{pr}} & bag  & []  \\ 
	\cline{2-3}
	& state variables & lastPub=7 \\
	\cline{2-3}
	& now & 9 \\ 
	\hline
	\multirow{2}{*}{\rotatebox[origin=c]{90}{cs}} & bag  & []  \\ 
	\cline{2-3}
	& state variables &  \\ 
	\cline{2-3}
	& now & 9 \\
	\hline
	\multirow{2}{*}{\rotatebox[origin=c]{90}{si}} & bag  & [(RcvPublish(8,28),9,$\infty$)]  \\ 
	\cline{2-3}
	& state variables & lastPub=0 \\ 
	\cline{2-3}
	& now & 9 \\
	\hline
	\multirow{2}{*}{\rotatebox[origin=c]{90}{s}} & bag  & []  \\ 
	\cline{2-3}
	& state variables & \\ 
	\cline{2-3}
	& now & 9 \\
	\end{tabularx}
};
\node[block, below of=s25, yshift=-15em](s30)
{
	\begin{tabularx}{\textwidth}{l|l|X} 
	\multicolumn{3}{c}{$s_{30}$}       \\ 
	\hline
	\multirow{2}{*}{\rotatebox[origin=c]{90}{c}} & bag  & []  \\ 
	\cline{2-3}
	& state variables &  \\
	\cline{2-3}
	& now & 10 \\ 
	\hline
	\multirow{2}{*}{\rotatebox[origin=c]{90}{pr}} & bag  & []  \\ 
	\cline{2-3}
	& state variables & lastPub=7 \\
	\cline{2-3}
	& now & 10 \\ 
	\hline
	\multirow{2}{*}{\rotatebox[origin=c]{90}{cs}} & bag  & []  \\ 
	\cline{2-3}
	& state variables &  \\ 
	\cline{2-3}
	& now & 10 \\
	\hline
	\multirow{2}{*}{\rotatebox[origin=c]{90}{si}} & bag  & []  \\ 
	\cline{2-3}
	& state variables & lastPub=10 \\ 
	\cline{2-3}
	& now & 10 \\
	\hline
	\multirow{2}{*}{\rotatebox[origin=c]{90}{s}} & bag  & [(publish(10),10,$\infty$)]  \\ 
	\cline{2-3}
	& state variables & \\ 
	\cline{2-3}
	& now & 10 \\
	\end{tabularx}
};
\node[first, above of=s16, xshift=33em](s1){$s_1$};
\path [line] (s6) -- node[xshift=-6em] {(publish(9,29),9,$\infty$)} (s25);
\path [line] (s20) -- node[xshift=-4em] {(accepted(),9,$\infty$)} (s25);
\path [line] (s25) -- node[xshift=-6em] 
{(RcvPublish(8,28),9,$\infty$)} (s30);
\path [dotted] (s1) -| (s16);
\path [dotted] (s1) -- (s17);
\path [dotted] (s1) -- (s6);
\path [dotted] (s1) -| (s20);
\path [dotted] (s27) -- (4.6,-27);
\path [dotted] (s30) -- (22,-27);
\end{tikzpicture}
		\caption{Part of FTTS for the publisher-subscriber pattern: $c$ is an instance of $\it Client$, $\it pr$ of $\it PublisherReqester$, $\it cs$ of $\it CommunicationSubstrate$, $\it si$ of $\it SubscriberInvoker$, and $\it s$ of $\it Service$.}
		\label{Fig:bftts}
	\end{figure}
	

	\section{Case Studies}\label{sec::caseStudy}
	Our reduction techniques are more applicable when using several patterns and devices, as one might find in an interoperable medical system. 
	We applied our techniques on the following two interoperability scenarios, modeled based on the guidelines given in Section \ref{subsec::guide}. The first case scenario relies on the Initiator-Executor pattern while the second one uses the Publisher-Subscriber and Request-Responder patterns.
	
	\subsection{X-Ray and Ventilator Synchronization Application}
	\label{sec:x-ray-app}
	
	As summarized \linebreak in~\cite{hatcliff2012rationale}, a simple example of automating clinician workflows via cooperating devices addresses problems in acquiring accurate chest X-ray images for patients on ventilators during surgery~\cite{Langevin-al:AJRCCM99}. To keep the lungs' movements from blurring the image, doctors must manually turn off the ventilator for a few seconds while they acquire the X-ray image, but there are risks in inadvertently leaving the ventilator off for too long.   For example, Lofsky documents a case where a patient death resulted when an anesthesiologist forgot to turn the ventilator back on due to a distraction in the operating room associated with dropped X-ray film and a jammed operating table~\cite{Lofsky:APSFNewsletter05}.
	These risks can be minimized by automatically coordinating the actions of the X-ray imaging device and the ventilator. Specifically, a centralized automated coordinator running a pre-programmed coordination script can use device data from the ventilator over the period of a few respiratory cycles to identify a target image acquisition point where the lungs will be at full inhalation or exhalation (and thus experiencing minimal motion). At the image acquisition point, the controller can pause the ventilator, activate the X-ray machine to acquire the image, and then signal the ventilator to ``unpause'' and continue the respiration. An interoperable medical system realizing this concept was first implemented in  \cite{xRay}.
	
	
	\begin{figure}[h]
		\centering
		\tikzstyle{block} = [rectangle, draw, 
text width=5em, text centered, rounded corners, minimum height=2em, node distance = 8em]
\tikzstyle{line} = [draw, -stealth, thick]
\begin{tikzpicture}[scale=0.8, transform shape]
\node [block] (supervisor) {Controller};
\node [block, right of=supervisor] (ventilator) {Ventilator};
\node [block, left of=supervisor] (x-ray) {X-Ray};
\path [line] (ventilator) -> (supervisor);
\path [line] (x-ray) -> (supervisor);
\path [line] (supervisor) -> (ventilator);
\path [line] (supervisor) -> (x-ray);
\end{tikzpicture}
		\caption{Communication between entities in the X-Ray and Ventilator Synchronization Application.\label{Fig::x-rayApp}}
	\end{figure}
	
	We model the system assuming that the image acquisition point is identified. Controller initiates starting and stopping actions on two devices through the initiator-executor pattern. 
	We define two instances of Initiator-Executor in the model for communication of the controller with ventilator and X-ray machine as shown in Figure \ref{Fig:x-rayMain}. Each instance of the pattern needs one instance of \textit{InitiatorRequester} and \textit{ExecutorInvoker} classes as the interface between device/app and the communication substrate as explained in Table \ref{Tab::guideline}. The model of controller is given in Figure \ref{Fig::Controller}. The controller communicates with the communication substrate via {\small{$\it IR\_VENTILATOR$}} and {\small$\it IR\_X\_RAY$}, instances of \textit{InitiatorRequester}. First the controller initiates a stop command to ventilator in its initialization in line $13$. Upon receiving a successful acknowledgement through $\it ack$ message, it initiates a start command to X-ray in line $18$. Upon successful completion of the start command, informed via $\it ack$, it initiates a stop to X-ray in line $21$ and then a start command to ventilator in line $23$. Upon receiving an unsuccessful acknowledge or successful completion of the last command, the controller is terminated by sending a $\it terminate$ message to itself.

	\begin{figure}[h]
		\lstinputlisting[style=customjava,label=x-rayMain]{./codes/x-rayAppMain.rebeca}
		\caption{Main part of X-ray and ventilator application}
		\label{Fig:x-rayMain}
	\end{figure}

	\begin{figure}[h]
		\lstinputlisting[style=customjava,label=controller]{./codes/Supervisor.rebeca}
		\caption{The controller model in Timed Rebeca}
		\label{Fig::Controller}
	\end{figure}

	As illustrated in Figure \ref{Fig:x-rayCS}, the \textit{communication substrate} extends the \textit{Base} class for transmitting the two messages of these interface components. 

	\begin{figure}[h]
		\lstinputlisting[style=customjava,label=x-rayCS]{./codes/x-rayCS.rebeca}
		\caption{X-Ray and ventilator communication substrate}
		\label{Fig:x-rayCS}
	\end{figure}
	
	\subsection{PCA Safety Interlock Application}
	\label{sec:pca-app}
	
	A Patient-Controlled Analgesia (PCA) pump is a medical device often used in clinical settings to intravenously infuse pain killers (e.g.,  opioids) at a programmed rate into a patient's blood stream. A PCA pump also includes a button that can be pushed by the patient to receive additional bolus doses of drug -- thus allowing patients to manage their own pain relief. PCA infusion is often used for pain relief when patients are recovering from an operation. Despite settings on the pump that limit the total amount of drug infused per hour and that impose
	lock out intervals between each bolus dose, there is still a risk of overdose when using PCA pumps.  
	
	Symptoms of opioid overdose include respiratory depression in which a patient's blood oxygenation (SPO2 as can be measured by pulse oximetry) drops and expelled carbon dioxide (End-Tidal CO2 as can be measured by capnography) increases. A PCA pump alone has no way of telling if a patient is suffering from respiratory depression.   However, using emerging interoperable medical system approaches that leverage MAP infrastructure, a pump can be integrated with a pulse oximeter (to measure blood oxygenation) and a capnometer (to measure ETCO2) and an additional control logic in a monitoring application as shown in Figure~\ref{Fig::monitoringApp}. The monitoring application looks for drops in SPO2 and increases in ETCO2, and if monitored values indicate that respiratory depression may be occuring, the application sends a command to the PCA Pump to halt infusion.   Other signals (not shown in Figure~\ref{Fig::monitoringApp}) may be used to alert care-givers of a problem.
	
	This scenario has been considered in a number of demonstrations in medical device interoperability research (see \eg \cite{Arney-al:ICCPS10,king:sehc10}), in interoperability risk management \cite{Hatcliff-al:ISPCE2018}, and is a subject of current standardization activities. The specifics of the model considered here are inspired by the prototype of Ranganath \footnote{\url{https://bitbucket.org/rvprasad/clinical-scenarios}} that uses OMG's DDS message-passing middleware as the communication substrate.

	
	\begin{figure}[htbp]
		\centering
		\tikzstyle{block} = [rectangle, draw, 
text width=7em, text centered, rounded corners, minimum height=4em, node distance = 8em]
\tikzstyle{line} = [draw, -stealth, thick]
\begin{tikzpicture}[scale=0.7, transform shape]
\node [block] (capnometer) {Capnometer};
\node [block, below of= capnometer] (oximeter) {Oximeter};
\node [block, right of = capnometer, xshift = 5em, yshift = -4em] (monitor) {Monitoring Application};
\node [block, right of = monitor, xshift = 5em] (pump) {Pump Infusion};
\path [line] (capnometer) -| node[yshift=0.6em, xshift=-4em] {ETCO2}(monitor);
\path [line] (capnometer) -| node[yshift=-0.6em, xshift=-4em] {Respiratory Rate}(monitor);
\path [line] (oximeter) -| node[yshift=0.6em, xshift=-4em] {SPO2} (monitor);
\path [line] (oximeter) -| node[yshift=-0.6em, xshift=-4em] {Pulse Rate} (monitor);
\path [line] (monitor) -- node[yshift=0.6em] {Command} (pump);
\end{tikzpicture}
		\caption{Communication between entities in the monitoring application.\label{Fig::monitoringApp}}
	\end{figure}
	
	The capnometer and oximeter devices publish data through the Publisher-Subscriber pattern, and the monitoring application detects if data strays outside of the valid range and sends the appropriate command to disable pump infusion. The model of monitoring application is given in Figure \ref{Fig::Monitor}. Monitor, as the role of service in Publisher-Subscriber, consumes data published by capnometer and oximeter in its $\it consume$ message server. It communicates with these devices via {\small$\it SI\_c$} and {\small$\it SI\_o$} which are instances of \textit{SubscriberInvoker}. Upon receiving a consume message and detecting invalid values of SPO2 or ETCO2, it sends an inactive command to pump in lines $22$ and $29$. As monitoring app communicates with pump via the Requester-Responder pattern, it sends its commands to pump via {\small $\it RR\_p$}, an instance of \textit{RequestRequester}. We abstractly model the invalid/valid values for SPO2 or ETCO2 by $\it false$/$\it true$ values for the parameter $\it data$ of $\it consume$ message server. This parameter together with $\it topic$ models the published data of devices.
	
	\begin{figure}[htbp]
		\centering
		\lstinputlisting[style=customjava,label=monitor]{./codes/monitoringApp.rebeca}
		\caption{The Timed Rebeca model of monitoring application.\label{Fig::Monitor}}
	\end{figure}

	As two devices (capnometer and oximeter) send data by using the Publisher-Subscriber pattern to the monitoring app, there are two instances of the Publisher-Subscriber pattern in the final model. The pump and monitoring app communicate via the Requester-Responder pattern. In the resulting Timed Rebeca model of the application, we define two instances of \textit{PublisherRequester} and \textit{SubscriberInvoker} interfaces in \textit{main} and one instances of \textit{RequestRequester} and \textit{ResponderInvoker}, as shown in Figure \ref{fig:main}. The instance of \textit{CommunicationSubstrate} class shown in Figure \ref{fig:csb}, called \textit{cs}, is used by all the components to send their messages and it includes four message servers for transmitting the messages of these two patterns. 

	\begin{figure}[H]
		\lstinputlisting[style=customjava]{./codes/main.rebeca}
		\caption{Main part of PCA safety interlock application}
		\label{fig:main}
	\end{figure}

	\subsection{Communication Substrate Models}\label{SS:network models}
	Different network settings such as Wireless, CAN, Ethernet, etc have different behaviors on transmitting the messages. The latency of the underlying network organization interacts with the local properties of devices namely $R_{\it pub}$, $R_{\it sub}$, and $L_{\it pub}$. To ensure that network latency will not exceed the derived upper bounds in terms of the local timing properties of devices before deployment, 
	we model the middleware behavior on transmitting messages within the communication substrate. Using the verification results, we can adjust the network by dynamic network configuration or capacity planning in organizations. 
	\FG{We specify a communication substrate class for each shared network settings based on our proposed template in Section \ref{subsec::guide} and adapt the delay values on transmitting messages and priorities on handling messages. We make an instance from each class for those device/apps that communicate over its corresponding shared network.} We consider three different network settings and hence, three communication substrate models for the PCA safety interlock application: 
	
	\begin{itemize}
		\item The first model shown in Figure \ref{fig:csb} imposes a non-deterministic delay on transmitting the messages of both patterns. The number and possible values for the delays are the same for all messages of both patterns. 
		\item The second model considers the same number (which is also equal to the first model) but different values for the delays.
		\item In the third model as shown in Figure \ref{Fig:CaseStudyCS2}, the network handles messages of the Publisher-Subscriber pattern with a higher priority over the messages of Requester-Responder. This is implemented by using \textit{@priority} statement before their message servers. 
		Communication substrate handles messages based on their arrival time. Messages that are arrived at the same time, are handled based on their priorities. A lower value indicates to a higher priority. 
	\end{itemize}

	\begin{figure}[h]
		\lstinputlisting[style=customjava,label=CaseStudyCS2]{./codes/CaseStudyCS2.rebeca}
		\caption{Communication substrate model with applying priority to patterns}
		\label{Fig:CaseStudyCS2}
	\end{figure}


	\subsection{Experimental Results}
	\FG{We extended Afra which applies our reduction technique during the state-space derivation of a complete given model. The tool adds a state to the set of previously generated states if it is not relaxed-shift equivalent to any of them. This on-the-fly application of reduction during state-space generation results in an efficient memory consumption. }\rev{Our tool currently does not support the third condition of Definition~\ref{Def::def2} and we have hard-coded the comparison on messages in the state-space generator for the case study: If the message is of $\RcvPublish$ type, we compare the result of $\life<R_{sub}$ instead of comparing the value of $\life$. } 
	
	We applied our reduction technique on the model of some patterns and case studies\footnote{The Rebeca models 
		are available at \url{fghassemi.adhoc.ir/shared/MedicalCodes.zip}}. We got $23\%$ reduction for Requester-Responder, $32\%$ for Publisher-Subscriber, $7\%$ for Initiator-Executor and \MZ{and $8\%$ for Sender-Receiver}. The Initiator-Executor \MZ{and The Sender-Receiver} pattern only have variables measuring the interval between two consecutive messages while the Requester-Responder and Publisher-Subscriber patterns also have remaining lifetime parameter in their messages (\emph{life}) for which our reduction technique relaxes the merge condition. So, the first two patterns have more reduction as their states may reduce with the first and third conditions of Definition \ref{Def::def2}, but the states of Initiator-Executor \MZ{and Sender-Receiver} only reduce with the first condition. In the PCA Monitoring Application which is a medical system using several patterns as explained in Section \ref{sec::caseStudy} we have $29\%$ reduction in the state space and for the X-ray and ventilator application we have $27\%$ reduction.
	
	\begin{table}[h]
		\begin{center}
			\caption{Reduction in patterns and their composition}
			\begin{tabular}{|c||c|c|c|}
				\hline
				\textbf{Model}
				&
				\textbf{No. states} & \textbf{No. states} &
				\textbf{Reduction}\\
				& \textbf{before reduction} & \textbf{after reduction} & \\
				\hline
				Requester-Responder & $205$ & $157$ & $23\%$\\
				Publisher-Subscriber  & $235$ & $159$ & $32\%$ \\
				Initiator-Executor & $113$ & $103$ & $7\%$\\
				\MZ{Sender-Receiver} & $179$ & $164$ & $8\%$\\
				Monitoring App. & $1058492$ & $753456$ &  $29\%$ \\
				X-Ray-Ventilator & $27755$ & $19309$ &  $27\%$ \\
				\hline
			\end{tabular}
			\label{tab:reduction}
		\end{center}
	\end{table}

	Table \ref{Table:caseStudyReduction} shows the reduction of the three network models described in Section \ref{SS:network models}. As we see the reduction in the first model is $28.82\%$. \FG{As the possible delay values are increased for one pattern in the second model, the state space size grows and the reduction increases to $28.92\%$.} In the prioritized model, the resulting state space is smaller than the others. After applying the reduction approach, the state space size reduces to $23.9\%$. 
	for larger state spaces we have more reduction, hence in more complicated systems with more components, we will have a significant amount of reduction in the state space to analyze the system more easily.
	
	\begin{table}[h]
		\begin{center}
			\caption{Reduction of PCA Monitoring Application with different communication substrates}
			\begin{tabular}{|c||c|c|c|}
				\hline
				\textbf{\begin{tabular}{c} Communication \\substrate model \end{tabular}}
				& \textbf{\begin{tabular}{c}  No. states \\before reduction \end{tabular}}
				& \textbf{\begin{tabular}{c}  No. states \\after reduction \end{tabular}}
				& \textbf{Reduction}
				\\
				\hline
				Equal delays & 1058492 &  753456 & 28.82 \% \\
				different delays  & 1074689 & 763811 & 28.92 \% \\
				prioritizing patterns & 576961 & 439049 & 23.9 \% \\
				
				\hline
			\end{tabular}
			\label{Table:caseStudyReduction}
		\end{center}
	\end{table}
	\section{Related Work}
	
	Our discussion of related work focuses on formal modeling, specification, and verification of interoperable medical systems.
	\cite{Arney-al:ICCPS10} and \cite{6341078} were some of the first works to consider formal modeling and verification of systems based on the ICE architecture.  \cite{6341078} addresses variations of the PCA Monitoring Application in Section~\ref{sec:pca-app}. It focuses on using both UPPAAL's timed automata formalism \cite{bengtsson1995uppaal} and Simulink to capture greater details of the component and system functionality and timing, as well as continuous dynamics of the patient's physiology and its response to the presence of opioids in the bloodstream. Using UPPAAL model checking, various system safety properties are verified including proper halting of PCA infusion when the patient's health is deteriorating. The UPPAAL modeling includes a model of the communication infrastructure which includes capture of non-deterministic error behaviors such as dropped messages.   Our work provides more detailed modeling of communication patterns and component oriented timing specifications, whereas \cite{6341078} provides much greater detail of the medical functionality and patient physiology with continuous dynamics.
	
	The PhD dissertation of Arney \cite{ArneyphDthesis} builds on \cite{Arney-al:ICCPS10} to address expanded versions of the applications in Sections \ref{sec:x-ray-app} and \ref{sec:pca-app}. The approach uses a domain-specific modeling language based on Extended Finite State Machines. A transformation from the modeling language to Java provides simulation capabilities, and a translation to UPPAAL provides model-checking capabilities. Similar to  \cite{Arney-al:ICCPS10}, the focus is on exposing the abstract functional behavior of devices and applications rather than more details of the middleware communication and associated communication timing.
	
	The PhD dissertation of King \cite{KingphDthesis} provides the closest capture of component-related timing properties related to the communication patterns \cite{7318707} and our abstract modeling of the patterns.  \cite{KingphDthesis} defines a domain-specific language for distributed interoperable medical systems with a formal semantics that takes into account the details of tasking and communication. As opposed to focusing on verification, the emphasis of the formalism is to provide a foundation for establishing the soundness of sophisticated real-time scheduling and component interactions of a novel time-partitioning middleware developed by King using Google's OpenFlow software-control network switches. King constructs a dedicated refinement-checking framework that addresses communication time and task and network scheduling using the symbolic representations of timing constraints based on UPPAAL's ``zone" representation.   A number of experiments are performed to assess the scalability and practical effectiveness of the framework.
	
	Larson et al. \cite{Larson-al:NFM2013} specify a more detailed version of the PCA Monitoring Application of Section~\ref{sec:pca-app} using the Architecture Analysis and Definition Language (AADL). Simple functional properties of components are specified on AADL component interfaces using the 
	BLESS interface specification language \cite{Larson-al:NFM2013}. Component behaviors are specified using the BLESS-variant of AADL's Behavior Annex -- a language of concurrent state machines with communication operations based on AADL's real-time port-based communication.   The BLESS theorem prover was used to prove in a compositional manner that component state machine behaviors conform to their BLESS interface specifications and that the composition of components satisfies important system-level behavior specifications. Compared to the approach of this paper,  \cite{Larson-al:NFM2013} focuses on compositional checking of richer functional properties using theorem proving techniques and does not expose the time-related details of communication patterns considered in the model-checking based verification in this paper.
	
	Each of the works above has different strengths that contribute important practical utilities. The long-term vision for specification and verification of interoperable medical systems would almost certainly include a \emph{suite} of techniques that work on a modeling framework supporting realistic and detailed architecture descriptions and embedded system implementations.   Interface specifications would be used to specify component behavior for functional, timing, and fault-related behavior. It is likely that both deductive methods and model checking techniques would be needed to support both compositional contract-based reasoning as well as system state-space exploration (with domain-specific partial order reductions that account for scheduling and atomicity properties of the framework). The work presented in this paper complements the works above by focusing on one part of this larger vision, i.e., it illustrates
	how an existing framework for timed actor-based communication can be leveraged to specify and verify timing-related abstractions of middleware communication between components.
	
	For work that does not focus on MAP-based architectures, 
	\cite{sobrinho2019formal} models and verifies biomedical signal acquisition systems 
	using colored Petri nets in \cite{sobrinho2019formal}. The Model checkers UPPAAL and PRISM \cite{kwiatkowska2011prism} are used to verify autonomous robotic systems as the physical environment of robots has timing constraints and probability \cite{luckcuck2019formal}. To tackle the state-space explosion,  reduction techniques such as symmetry and counter abstraction \cite{clarke2011model} are used to verify the models of swarm robotic systems.
	
	
	\section{Conclusion and Future Work}\label{sec::conclude}
	In this paper, we formally modeled composite medical devices interconnected by communication patterns in Timed Rebeca modeling language. We analyzed the configuration of their parameters to assure their timing requirements by Afra tool using the model checking technique. Since modeling many devices using several patterns results in state-space explosion, we proposed a reduction technique by extending FTTS merging technique with regard to the local timing properties. We illustrated the applicability of our approach on two model scenarios inspired by real-world medical systems. 
	We applied our reduction technique on these models. Our results show 
	significant reduction in systems with a higher number of components. We proposed guidelines and templates for modeling composite systems. Our templates take advantage of inheritance concept in Timed Rebeca in order to have a common communication substrate among instances of patterns. 
	
	Enriching our models by adding some modal behaviors to devices/apps is among our future work. For example,  we can consider different operational modes for the monitoring application in PCA safety interlock application to verify the operational logic of monitoring application. The modes can be normal, degraded, and failed operation. In the normal operation mode, the monitoring app makes decisions based on inputs from both the pulse oximeter and the capnometer. In degraded mode, one of the two devices has failed or gone offline, in that case, the logic in the monitoring app only uses information from the non-failing device. In the failed operation mode, both monitoring devices have gone off-line and the clinician should be notified via an alarm. By modeling the scheduling algorithm of communication network, we can measure communication latency more precisely.

	We aim to generalize our reduction approach by automatically deriving constraints on state variables like the one for $\lastPub$ or message contents to relax shift-equivalence relation in other domains. To this aim, we can use the techniques of static analysis. \FG{Defining a specific language to model the composition and coordination of medical devices, leveraging the proposed communication patterns is among our future work.} 
	
	\section*{Acknowledgment}
	The research of the forth author is partially supported by the KKS Synergy project, SACSys, the SSF project Serendipity, and the KKS Profile project DPAC.

	\bibliographystyle{alphaurl}
	\bibliography{main}
\end{document}